\newlength\myindent
\newlength\mycolwid
\newcounter{proofcount}
\newenvironment{proof}
{
\stepcounter{proofcount} {\textit{Proof}:} 
}
\newtheorem{theorem}{Theorem}
\newtheorem{definition}[theorem]{Definition}
\title{Emergent Behavior in Multipartite Large Networks: Multi-virus Epidemics}
\author{Augusto Santos$^{*}$, Jos\'e M.~F.~Moura$^{\natural}$, and Jo\~{a}o Xavier$^\dagger$
\thanks{$*$ A.~Santos is with the Dep.~of Electrical and Computer Engineering, Carnegie Mellon University, USA, and Instituto de Sistemas e Robotica (ISR), Instituto Superior T\'{e}cnico (IST), Av.~Rovisco Pais, Lisboa, Portugal (augustos@andrew.cmu.edu).}
\thanks{$^\natural$J.~M.~F.~Moura is with the Dep.~of Electrical and Computer Engineering,
Carnegie Mellon University, Pittsburgh, PA, USA 15213, ph:(412)268-6341, fax: (412)268-3890 (moura@ece.cmu.edu).}
\thanks{$\dagger$ J.~Xavier is with ISR, IST, Av.~Rovisco Pais, Lisboa, Portugal (jxavier@isr.ist.utl.pt).}
\thanks{The work of Jos\'e M.~F.~Moura and Augusto Santos was supported in part by the National Science Foundation under Grant \#~CCF--1011903 and in part by the Air Force Office of Scientific Research under Grant \#~FA--95501010291.

The work of J.~Xavier and A.~Santos was also supported by the Funda\c{c}\~{a}o para a Ci\^{e}ncia e a Tecnologia (FCT) (Portuguese Foundation for Science and Technology) through the Carnegie Mellon$|$Portugal Program under Grant SFRH/BD/33516/2008, CMU-PT/SIA/0026/2009 and SFRH/BD/33518/2008, and by ISR/IST pluriannual funding (POSC program, FEDER).}}
\begin{document}
\maketitle \thispagestyle{empty} \maketitle



\begin{abstract}

Epidemics in large complete networks is well established. In contrast, we consider epidemics in non-complete networks. We establish the fluid limit \emph{macroscopic} dynamics of a multi-virus spread over a multipartite network as the number of nodes at each partite or \emph{island} grows large. The virus spread follows a peer-to-peer random rule of infection in line with the Harris contact process (refer to~\cite{Metastability}). The model conforms to an SIS (susceptible-infected-susceptible) type, where a node is either infected or it is healthy and prone to be infected. The local (at node level) random infection model induces the emergence of structured dynamics at the \emph{macroscale}. Namely, we prove that, as the multipartite network grows large, the normalized Markov jump vector process $\left(\overline{\mathbf{Y}}^\mathbf{N}(t)\right)=\left(\overline{Y}_1^\mathbf{N}(t),\ldots, \overline{Y}_M^\mathbf{N}(t)\right)$ collecting the fraction of infected nodes at each island $i=1,\ldots,M$, converges weakly (with respect to the Skorokhod topology on the space of \emph{c\`{a}dl\`{a}g} sample paths) to the solution of an $M$-dimensional vector nonlinear coupled ordinary differential equation. In the case of multi-virus diffusion with $K\in\mathbb{N}$ distinct strains of virus, the Markov jurmp matrix process $\left(\overline{\mathbf{Y}}^\mathbf{N}(t)\right)$, stacking the fraction of nodes infected with virus type~$j$, $j=1,\ldots,K$,  at each island $i=1,\ldots,M$, converges weakly as well to the solution of a $\left(K\times M\right)$-dimensional vector differential equation that is also characterized.
\end{abstract}

\textbf{Keywords}: Emergent behavior, Epidemics, Multipartite interacting agents, Exact mean field. 







\newpage
\section{Introduction}

Many complex dynamical systems exhibit \emph{emergent behavior} -- a well-structured macroscopic dynamics induced by
simple, possibly random, local rules of interacting agents. Flocks of birds, ant colonies, beehives, brain neural networks, invasive tumor growth, and epidemics are all examples of large scale interacting agents systems displaying complex adaptive functional behaviors. Under appropriate initial conditions, a flock of birds reaches consensus on its cruise velocity
while each bird probes only its nearest neighbors dynamics without a preferred leader in the flock (refer to~\cite{Flock}). This gives rise to synchronized flocking flying formations. Ant colonies can design optimal trails to access sources of food even though no ant bears the cognitive ability to shape up the colony to its blueprint mature optimal global behavior. Roughly, each scout-ant wanders around randomly tracking the leftover pheromone released by its scout peers. Reference~\cite{ants} establishes the emergent dynamics of an idealized stochastic network model for ant colonies as the fluid limit dynamics of the network model (as the colony grows large). Seizure is an intricate outcome of the complex neural network dynamics of the brain. Reference~\cite{seizures} presents an overview of graphical dynamical models that have been applied to better understand the nature of seizures and bridge the microscopical electrical activity in the brain with the clinical observations of the phenomenon.

The challenge in studying such large scale systems lies in their \textbf{high-dimensionality} plus the \textbf{coupling} among the agents via their interactions. Together, these are the needed ingredients to induce emergent behavior. For instance, consider $N$ agents whose state-vector $\mathbf{X}^{N}(t):=\left(X_1(t),\ldots,X_N(t)\right)$ evolves as a jump Markov process over the state space $\mathcal{S}^{N}:=\left\{0,1\right\}^N$. If the agents are independent, then it turns out that the state of each agent evolves as a jump Markov process and, moreover, any state construct $\left(f\left(X_1(t),\ldots,X_N(t)\right)\right)$, where $f:\left\{0,1\right\}^N \rightarrow \mathbb{R}^M$ bears appropriate measurability properties (we skip the details here), is a Markov jump process. For instance, the fraction of agents at state $1$, $f\left(X_1(t),\ldots,X_N(t)\right)=\sum_{i=1}^N X_i(t)/N$, is Markov. Even for large $N$, due to the independence assumption, a qualitative analysis of $\left(X^{N}(t)\right)$ becomes tractable, but, in this example of independent agents, any weak law of large numbers will reflect the average behavior of each individual agent rather than an emergent global cooperative behavior. When the agents are coupled -- e.g., an agent switches to state $1$ with a rate that is proportional to the number of its neighbors in state~$1$--then, in general, neither the state of each agent is Markov nor the \emph{macroscopic} low-dimensional states $\left(f\left(X_1(t),\ldots,X_N(t)\right)\right)$ are Markov and studying the \emph{microscopic} high-dimensional dynamical system $\left(\mathbf{X}^N(t)\right)$ becomes quickly unfeasible with the number of agents $N$. Establishing the emergent dynamics or, in other words, the functional weak law of large numbers under an arbitrary coupling topology of the agents is challenging. For the special case of a \emph{complete} topology of interaction -- any agent evenly affects any other agent in the cloud, -- low-dimensional \emph{macroscopic} state-variables may still be Markov, even though the state of each individual agent is no longer Markov. Again, for complete networks, the fraction of infected nodes $f\left(X_1(t),\ldots,X_N(t)\right)=\sum_{i=1}^N X_i(t)/N$ is Markov. Under this complete network setting, the emergent behavior is framed as the fluid limit dynamics of a global state variable $\left(\mathbf{Y}(t)\right):=\left(f\left(X_1(t),\ldots,X_N(t)\right)\right)$ of interest. For example, reference~\cite{Antunes2} considers a multiclass flow of packets over a \textbf{complete} network with finite capacity nodes. It defines the macroscopic state variable $\left(\mathbf{Y}^{N}(t)\right)=\left(Y_1^N(t),\ldots,Y^N_L(t)\right)$ that collects the fraction of nodes $Y_i^N(t)$ with a particular distribution $i$ of packets over the different classes. Reference~\cite{Antunes2} proves that the empirical distribution $\left(\mathbf{Y}^{N}(t)\right)$ converges weakly, with respect to the Skorokhod topology on the space of sample paths, to the solution of a vector ordinary differential equation.

For general topologies, the evolution of \emph{macroscopic} state variables is intricately tied to the high-dimensional \emph{microscopic} state $\left(\mathbf{X}^{N}(t)\right)$ of the system. Reference~\cite{4549746} proposes to consider the impact of the topology on the diffusion of a virus in the network, but, to overcome the coupling difficulty that arises with non complete networks, reference~\cite{4549746} departs from a peer-to-peer diffusion model. The authors in~\cite{4549746} replace the exact transition rates of the microstate process $\left(\mathbf{X}(t)\right)$ by their average to establish their $N$-intertwined model. Were the states of the nodes independent processes (a very strong assumption) and the resulting $N$-intertwined model would be an exact model to describe the dynamics of the likelihood of infection of each node as pointed out by the authors.


In this paper, we go beyond the complete network model to establish the \textbf{exact} meanfield dynamics of a multi-virus epidemics over the class of multipartite networks, without making any artificial simplifying assumptions. We assume a stochastic network model for the peer-to-peer spread of different strains of virus among a cloud of agents to establish the emergent dynamics of the epidemics. The emergent behavior is the fluid limit dynamics of the fraction of infected nodes over time. Namely, we show that, when the number of
agents goes to infinity in a way to be described momentarily, the fraction of infected agents at each \emph{island} in the multipartite network converges weakly to the solution of a set of nonlinear ordinary differential equations.

   We briefly outline the paper. In Section~\ref{sec:problemformulation}, we set the problem formulation, defining the peer-to-peer stochastic network model underlying the microscopic dynamics of the diffusion of the strains of virus. In Section~\ref{sec:meanfield}, we establish the meanfield dynamics for a single virus spread over a bipartite network, that is, we prove that $\left(\overline{\mathbf{Y}}^\mathbf{N}(t)\right)=\left(\overline{Y}_1^\mathbf{N}(t), \overline{Y}_2^\mathbf{N}(t)\right)$ collecting the fraction of infected nodes at each island converges weakly to the solution of an ordinary differential equation. In Section~\ref{sec:extension}, we extend the meanfield proof in Section~\ref{sec:meanfield} to the general case of a multipartite network under multi-virus spread. Finally, in Section~\ref{sec:conclusion}, we conclude the paper.

\section{Problem Formulation}\label{sec:problemformulation}

In this Section, we formally introduce the stochastic network process that models the peer-to-peer diffusion of multiple strains of virus over multipartite networks. First, we set up the environment where the epidemics takes place. Let $G=\left(V,E\right)$ be an undirected network, where $V=\left\{1,2,\ldots,N\right\}$ and $E=\left\{\left\{i,j\right\}\,:\,i,j\in V\right\}$ represent the set of nodes and edges of the graph $G$, respectively. We say that two nodes $i,j\in V$ are connected and represent it as $i\sim j$, if $\left\{i,j\right\}\in E$. In this paper, we establish the mean field dynamics of a multi-viral strain epidemics over the class of multipartite networks that is defined next.
\begin{definition}[Multipartite network]\label{def:multipartitenetwork}
A network $G=\left(V,E\right)$ is multipartite if there exists a partition $\overline{V}=\left\{V_1,\ldots,V_M\right\}$ of $V$ such that $\left\{a,b\right\}\notin E$ for any $a,\,b\in V_i$ and any $i\in\left\{1,\ldots,M\right\}$. Also,
\begin{align*}
u\in V_i,\,v\in V_j,\,u\sim v\Rightarrow w\sim r,\,\,\,\, \forall\, {w\in V_i,\,r\in V_j}
\end{align*}
with $i\neq j$. When $M=2$, the multipartite network is called \textit{bipartite}.
\hfill$\small \blacksquare$
\end{definition}
The elements $V_i$ of the partition $\overline{V}$ are referred to as \textit{islands}. In words, if two nodes of different islands $V_i$ and $V_j$ are connected then any node from~$V_i$ is connected to any node from~$V_j$. In this case, we say that islands $V_i$ and $V_j$ are connected and refer to it as $V_i\sim V_j$. This allows us to abstract the supertopology structure of islands as illustrated in Figure~\ref{fig:multipartite}.
\begin{figure} [hbt]
\begin{center}
\includegraphics[scale= 0.5]{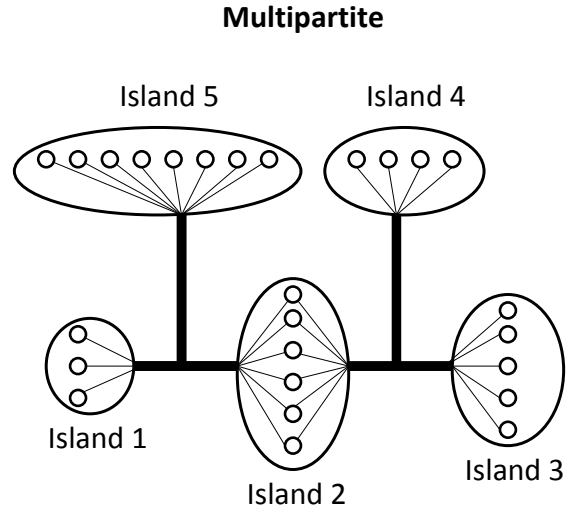}
\caption{Multipartite network representation. Nodes from the same island cannot transmit the infection amongst themselves. Nodes from an island can transmit the virus to nodes in neighboring islands. For instance, any node from island $5$ can infect any node from islands $1$ and $2$.}\label{fig:multipartite}
\end{center}
\end{figure}
Also, we refer to
\begin{equation}
\mathcal{N}\left(V_i\right)=\left\{V_j\,:\,V_j\sim V_i\right\}\nonumber
\end{equation}
as the superneighborhood of island $V_i$ and $d_i=\left|\mathcal{N}\left(V_i\right)\right|$ refers to the superdegree or number of neighboring islands of island $V_i$. As an example, for the superneighbors of island $1$, in Figure~\ref{fig:multipartite}, we have $\mathcal{N}\left(1\right)=\left\{2,5\right\}$ and thus, $d_1=2$.

Given a graph $G=\left(E,V\right)$, we define the sequence of induced multipartite networks $G^{\mathbf{N}}=\left(E^{\mathbf{N}},V^{\mathbf{N}}\right)$, indexed by $\mathbf{N}=\left(N_1,\ldots,N_M\right)\in\mathbb{R}^{M}$, where $M$ is the fixed number of islands, $N_i$ is the number of nodes at the $i$th island of $G^{\mathbf{N}}$ and
\begin{equation}
V^{\mathbf{N}}_i\sim V^{\mathbf{N}}_j \Leftrightarrow \ell_i\in V^{\mathbf{N}}_i\sim \ell_j\in V^{\mathbf{N}}_j
\end{equation}
for all $\ell_i,\ell_j\in V$, where $V^{\mathbf{N}}_i$ is the $i$th island of $G^{\mathbf{N}}$, as depicted in figure~\ref{fig:multipartite2}. In words, all the multipartite graphs in the sequence $G^{\mathbf{N}}$ share the same supertopology imposed by the topology of $G$, differing only on the number of nodes per island that is given by the upper-index $\mathbf{N}$. Given a graph $G$, we are interested in obtaining the limiting dynamics of the fraction of infected nodes per island and per strain over $G^{\mathbf{N}}$ as $\mathbf{N}$ grows to infinity.
\begin{figure} [hbt]
\begin{center}
\includegraphics[scale= 0.5]{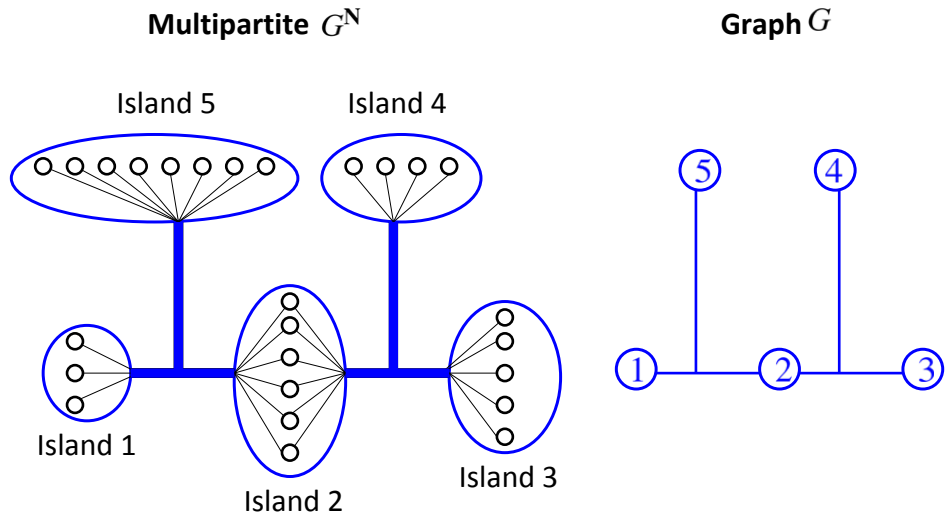}
\caption{Illustration of the multipartite network $G^{\mathbf{N}}$, with $\mathbf{N}=\left(3,6,5,4,8\right)$ and underlying graph $G$.}\label{fig:multipartite2}
\end{center}
\end{figure}
%
%

Now, we introduce the dynamical model of diffusion. A node in a $G^\mathbf{N}$ multipartite network may assume one of two possible states: infected or healthy and susceptible to infections. We define the binary \emph{tensor} microprocess $\left(\mathbf{X}^\mathbf{N}(t)\right)$ as conveying the state of each node over time in $G^{\mathbf{N}}$. By $X_{iIk}^{\mathbf{N}}(t)=1$, we refer to node $i$ at island $I$ being infected at time $t$, $t\geq 0$, with virus strain $k\in\left\{1,2,\ldots,K\right\}$, and by $X_{iIk}^{\mathbf{N}}(t)=0$ if the corresponding node $i$ is healthy or infected with other strain $l\neq k$; in this latter case, $X_{iIl}^{\mathbf{N}}(t)=1$. The upper-index $\mathbf{N}$ will permeate all relevant stochastic processes constructs to emphasize the underlying multipartite network $G^{\mathbf{N}}$ induced by $G$. If only one strain of virus is present in the network then, for notational simplicity, we suppress the extra-index $k$ and rather write $X_{iI}^{\mathbf{N}}(t)=1$ to represent that node $i$ from island $I$ is infected at time $t$, $t\geq 0$, and $X_{iI}^{\mathbf{N}}(t)=0$ if otherwise.

Our microscopical model of diffusion of the virus is set at node level and goes as follows. If a node $i$ from island $I$ is $y$-infected--i.e., infected with the virus strain $y$--at time $t$, then it heals after an exponentially distributed random time $T_{Iy}^h(i)\sim {\sf Exp}\left(\mu_I^{y}\right)$ whose distribution depends on the island $I$ and on the type of infection $y$. Once a node $i$ in island $I$ is $y$-infected, it transmits the infection to a randomly chosen node at the neighbor island $J\in\mathcal{N}\left(I\right)$ after an exponentially distributed random time~$T_{IJy}^c(i)\sim {\sf Exp}\left(\gamma^{y}_{IJ}\right)$ whose distribution only depends on the ordered pair $\left(I,J\right)\in\left\{1,2,\ldots,M\right\}^2$ of communicating islands $I$ and $J$, and the type of infection $y$. Whenever considering only one strain of virus, we drop the strain subindex $y$. Also, if there is no room for ambiguity, we drop the node identity $i$, writing $T_{Iy}^h$ for the healing time and $T_{IJy}^c$ for the infection time of a node at island $I$. If the chosen node $j$ at island $J$ is already $k$-infected at the time of infection $t$, $t\geq 0$, then nothing happens; that is, $X_{jJk}^{\mathbf{N}}(t)=1$. Therefore, $\sum_{k}X_{iIk}(t)\leq 1$ for all $i\in I$, $I\in\left\{1,\ldots,M\right\}$ and $t$, $t\geq 0$, or in words, a node can only be infected by one strain of virus at a time.

To summarize, an infected node $i \in I$ activates $d_I+1$ independent exponentially distributed random variables, where $d_I=\left|\mathcal{N}\left(I\right)\right|$ of them are associated to the times for infection and one to the time for healing. Each of the $d_I$ random variables for infection is dedicated to one superneighbor $J\in \mathcal{N}\left(I\right)$ of island $I$ as illustrated in Figure~\ref{fig:dedicated}.
\begin{figure} [hbt]
\begin{center}
\includegraphics[scale= 0.5]{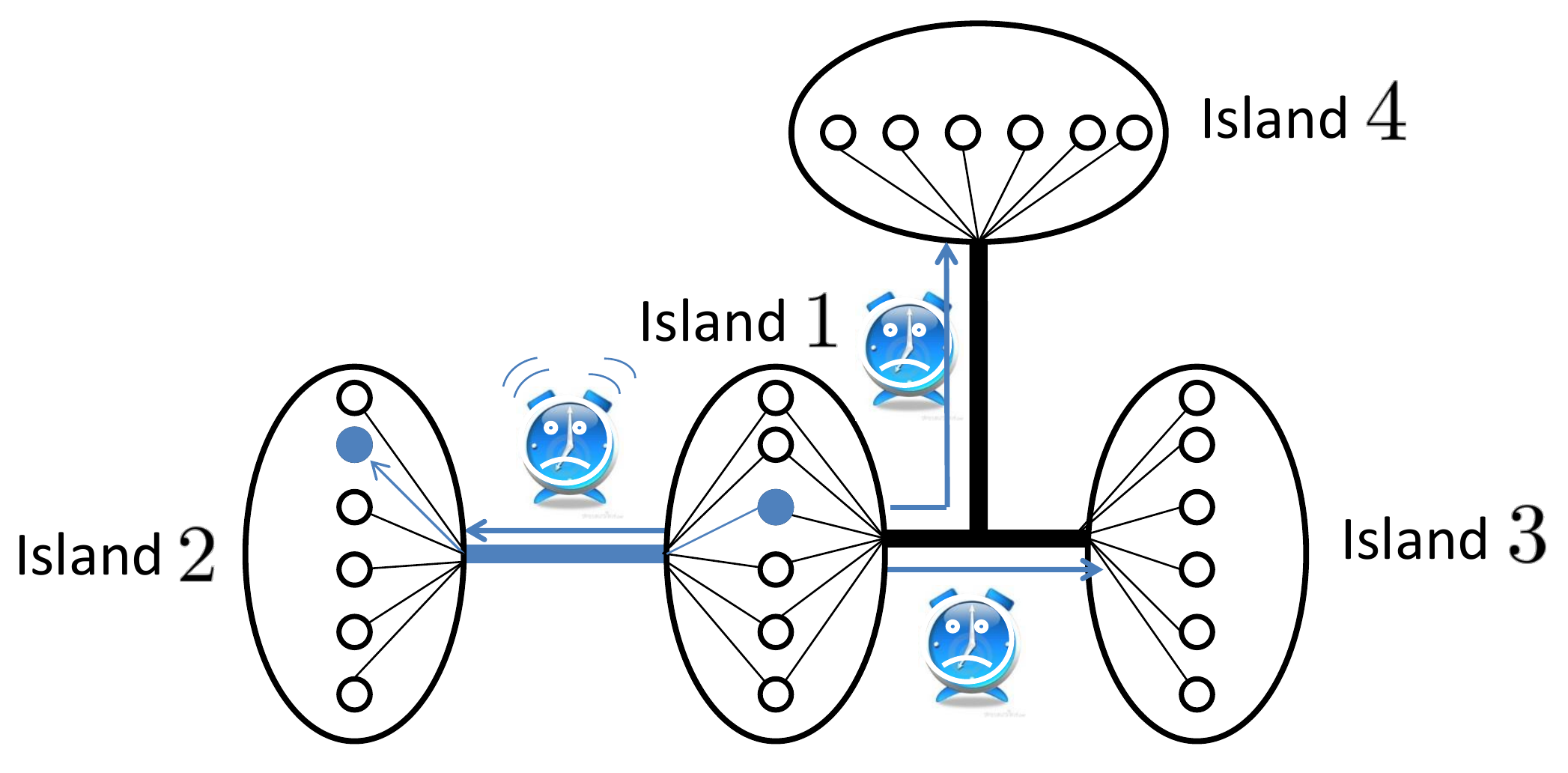}
\caption{Illustration of an infection. The infected (dark colored) node at island $1$ activates three exponentially distributed random \emph{clocks}, each dedicated to one neighbor island. The clock dedicated to island $2$ rings after a time $T_{12y}^c\sim {\sf Exp}\left(\gamma^{y}_{12}\right)$ and a node from island $2$ is randomly picked to be infected. Also, the infected node in island $1$ heals after a time $T_{1y}^h\sim {\sf Exp}\left(\mu^{y}_{1}\right)$. }\label{fig:dedicated}
\end{center}
\end{figure}
As an example, if a node from island $1$ in Figure~\ref{fig:dedicated} is $y$-infected, then, after a time interval of length $T_{12y}^c\sim {\sf Exp}\left(\gamma^{y}_{12}\right)$, it picks randomly a node from island $2$ and infects it. Also, after a time $T_{13y}^c\sim {\sf Exp}\left(\gamma^{y}_{13}\right)$, that is independent of $T_{12y}^c$ and the healing time $T_{1y}^h$, it chooses randomly a node from island $3$ and infects it as long as it is still infected.

The microscopic process $\left(\mathbf{X}^\mathbf{N}(t)\right)$ thus, evolves through jumps according to the triggering of a sequence of independent exponentially distributed random variables. All time service random variables are assumed to be independent and have support in a single probability space $\left(\Omega, \mathcal{F}, \mathbb{P}\right)$. We denote by $\left(\mathcal{F}^\mathbf{N}_t\right)_{t\geq 0}$ the natural filtration induced by the sequence of independent random variables. That is, $\mathcal{F}^\mathbf{N}_t$ gives us information on the values of all random variable times (healing or infection) involved in the evolution of $\left(\mathbf{X}^\mathbf{N}(t)\right)$ up to time $t$, $t\geq 0$. Note that by construction $\left(\mathbf{X}^\mathbf{N}(t)\right)$ is adapted to $\left(\mathcal{F}^\mathbf{N}_t\right)_{t\geq 0}$, i.e., $\sigma\left\{\mathbf{X}^\mathbf{N}(s)\,:\,0\leq s\leq t\right\}\subset\mathcal{F}^\mathbf{N}_t$, for all $t$, $t\geq 0$, where $\sigma\left\{\mathbf{X}^\mathbf{N}(s)\,:\,0\leq s\leq t\right\}$ represents the natural filtration of the process $\left(\mathbf{X}^\mathbf{N}(t)\right)$.

Analyzing the full-microstate of the network over time according to the local infection model just presented becomes quickly unfeasible with the number of nodes in the network as the microscopic process $\left(\mathbf{X}^\mathbf{N}(t)\right)$ is high-dimensional. Instead, we are interested in characterizing macroscopically the virus evolution in the multipartite network, namely, studying the dynamics of the number or fraction of infected nodes at each island. To fix ideas, we assume single virus epidemics for the rest of this Section, unless otherwise stated. We refer to $\left(\mathbf{Y}^{\mathbf{N}}(t)\right)=\left(Y_1^\mathbf{N}(t),\ldots,Y_M^\mathbf{N}(t)\right)$ as the macroprocess that stacks the number of infected nodes at each island. The normalized vector process $\left(\overline{\mathbf{Y}}^{\mathbf{N}}(t)\right)=\left(\overline{Y}_1^\mathbf{N}(t),\ldots,\overline{Y}_M^\mathbf{N}(t)\right)$
collects the corresponding fractions of infected nodes per island, where $\overline{Y}_i^\mathbf{N}(t)=Y_i^\mathbf{N}(t)/N_i$ is the fraction of infected nodes at island $i$ at time $t$, $t\geq 0$, with $N_i=|V_i|$, the number of nodes at island~$i$. The sequence of macroprocesses $\left(\mathbf{Y}^\mathbf{N}(t)\right)_{\mathbf{N}}$ is indexed by $\mathbf{N}=\left(N_1,\ldots,N_M\right)$, the vector collecting the cardinality $N_i$ of each island $i$ in the underlying multipartite network $G^\mathbf{N}$. It turns out that, from the microscopic model of peer-to-peer infection previously described, $\left(\mathbf{Y}^{\mathbf{N}}(t)\right)$ is a jump Markov process with transition rate matrix given by
\begin{eqnarray}
Q\left(\mathbf{Y}^\mathbf{N}(t),\mathbf{Y}^\mathbf{N}(t)-e_i\right) & = & \mu_i Y_i^\mathbf{N}(t)\label{eq:rate2}\\
Q\left(\mathbf{Y}^\mathbf{N}(t),\mathbf{Y}^\mathbf{N}(t)+e_i\right) & = & \left(\sum_{j\sim i}\gamma_{ji} Y_j^\mathbf{N}(t)\right)\left(\frac{N_i-Y_i^\mathbf{N}(t)}{N_i}\right),\label{eq:rate1}
\end{eqnarray}
where $e_i\in\mathbb{R}^{M}$ is the canonical vector with the $i$th entry equal to $1$ and the remaining entries equal to $0$, and the lowercase subindexes in equations~(\ref{eq:rate2})-(\ref{eq:rate1}) refer to islands. In equation~(\ref{eq:rate2}), we represent the rate to decrease the infected population at island $i$ by one. This happens once any infected node from island $i$ heals,
\begin{equation}
T^h_i=\min\left\{T_{i}^h(k)\,:\,X^\mathbf{N}_{ki}(t)=1,\,k\in i\right\}\sim {\sf Exp}\left(\mu_{i}Y_i^\mathbf{N}(t)\right).\nonumber
\end{equation}
In equation~(\ref{eq:rate1}), we represent the rate to increase by one the infected population at island $i$. In this case, each neighboring island $j\in \mathcal{N}(i)$ of $i$ will have $Y_j^\mathbf{N}(t)$ infected nodes and, thus, after a time
\begin{equation}
T^c_i=\min\left\{T_{ji}^c(k)\,:\,X^\mathbf{N}_{jk}(t)=1,\,k\in j,\,j\sim i\right\}\sim {\sf Exp}\left(\sum_{j\sim i}\gamma_{ji}Y_j^\mathbf{N}(t)\right),\label{eq:min}
\end{equation}
an attempt of infection will be made by a neighboring node at some neighboring island, where $T_{ji}^c(k)\sim {\sf Exp}\left(\gamma_{ji}\right)$ is the time that the infected node $k\in j$ takes to make an attempt of infection towards a node at island~$i$. The \emph{minimum} in equation~(\ref{eq:min}) runs over all infected nodes in all the neighboring islands of $i$. The rate at which an infection from the neighboring islands takes to strike island $i$ is, thus, $\left(\sum_{j\sim i}\gamma_{ji}^{y}Y_j^\mathbf{N}(t)\right)$. As referred in the microscopic model description, if an infection is transmitted to an already infected node, then the state of the sink node remains unchanged, that is, the infected population does not increase. Therefore, the effective rate of infection will be the rate at which infections arriving at island $i$ hit a healthy node, that is, it is given by the arrival rate $\left(\sum_{j\sim i}\gamma_{ji}^{y}Y_j^\mathbf{N}(t)\right)$ times the probability of hitting a healthy node, which is equal to the fraction of healthy nodes at island $i$ at time $t$, $\left(\frac{N_i-Y^\mathbf{N}_i(t)}{N_i}\right)$, since the chosen \emph{victim} node is drawn uniformly randomly. Note that the topology of the underlying network impacts the increasing rate $Q\left(\mathbf{Y}^\mathbf{N}(t),\mathbf{Y}^\mathbf{N}(t)+e_i\right)$ whereas the decreasing rate $Q\left(\mathbf{Y}^\mathbf{N}(t),\mathbf{Y}^\mathbf{N}(t)-e_i\right)$ only relies on the number of infected nodes at the network at time $t$, regardless of the peer-to-peer connections. Note that two or more events--infection or healing of a node--happens at the same time with probability zero, that is, the evolution of the vector macroprocess $\left(\mathbf{Y}^\mathbf{N}(t)\right)$ is driven almost surely through unit jumps of $e_i$ at each time. The goal of the next Section is to explore the Markov structure of the macroprocess $\left(\mathbf{Y}^\mathbf{N}(t)\right)$ to establish weak convergence as the number of nodes per island $\mathbf{N}$ goes to infinite keeping the underlying graph $G$ (and thus, the number of islands $M$) fixed. Namely, the normalized process $\overline{\mathbf{Y}}^\mathbf{N}(t)$ admits a decomposition in terms of a martingale and a drift term that is a functional built upon the transition rates previously presented,
\begin{equation}
\mathbf{\overline{Y}}^\mathbf{N}(t)=\mathbf{\overline{Y}}^\mathbf{N}(0)+\underbrace{\mathbf{\overline{M}}^\mathbf{N}(t)}_{martingale}+\underbrace{\int_0^t F\left(\mathbf{\overline{Y}}^\mathbf{N}(s-)\right)ds}_{drift}.\nonumber
\end{equation}
This decomposition is also known as Dynkin's formula (refer to~\cite{Diffusion}) and holds for any Markov process. In the next Section, we provide an explicit characterization of the martingale term, and we will establish that it converges weakly to zero as the underlying $G^\mathbf{N}$ multipartite network grows large, and as a result (to be also proved) the vector process will converge weakly to the solution of the deterministic ordinary differential equation
\begin{equation}
\frac{d}{dt}\mathbf{\overline{Y}}(t)=F\left(\mathbf{\overline{Y}}(t)\right),\nonumber
\end{equation}
where the vector field $F$ will be characterized momentarily.
\section{Mean Field -- Bipartite Single Virus}\label{sec:meanfield}
In this Section, we establish for a single virus spread over a bipartite network that the empirical distribution sequence $\left(\mathbf{\overline{\mathbf{Y}}^{\mathbf{N}}}(t)\right)$ converges weakly, as the network grows large, to the solution of a deterministic vector differential equation. By the \emph{network grows large}, we mean that $N_m\rightarrow \infty$ for all $m\in\left\{1,\ldots, M\right\}$ with a finite asymptotic ratio $N_i/N_j\rightarrow \alpha_{ji}<\infty$ between neighboring island sizes, as will be clearer momentarily, keeping the number of islands $M$ fixed. To fix ideas, we consider throughout this Section a single-virus epidemics in a bipartite network. We extend the analysis to the multivirus epidemics over a general multipartite network in Section~\ref{sec:extension}. We remark that the Markov jump process $\left(\mathbf{\overline{Y}^{N}}(t)\right)$ admits a decomposition into a martingale term plus a drift term obtained from transition rates as characterized in the previous Section. In Section~\ref{sec:notation}, we fix needed notation. Section~\ref{sec:pointprocesses} provides a pathwise characterization for the process $\left(\mathbf{Y}^\mathbf{N}(t)\right)$, in particular, the pathwise description of the martingale term $\left(\mathbf{M}^\mathbf{N}(t)\right)$ that will be later important to establish the weak convergence to zero of its normalized counterpart $\left(\mathbf{\overline{M}}^\mathbf{N}(t)\right)=\left(M_1^\mathbf{N}(t)/N_1,\ldots,M_M^\mathbf{N}(t)/N_M\right)$. To prove the weak convergence of the normalized process $\left(\mathbf{\overline{Y}}^{\mathbf{N}}(t)\right)$, we start by showing in Section~\ref{sec:martingalevanishes} that the underlying sequence of martingales $\left(\mathbf{\overline{M}^{N}}(t)\right)$ converges weakly (with respect to the Skorokhod topology) to zero as $N_i$ grows large for all $i=1,\ldots,M$ with $M\in\mathbb{N}$, the number of islands, kept fixed. In Section~\ref{sec:tight}, we show that as a consequence $\left(\mathbf{\overline{Y}^{N}}(t)\right)$ is a tight family indexed by $\mathbf{N}$ whose accumulation points (for the weak convergence) are necessarily given by the solutions of a differential equation. By uniqueness of the resulting differential equation (the vector field is globally-Libpschitz), any convergent subsequence converges to the solution of this differential equation. Therefore, since the limiting process is unique, it follows that the whole sequence converges to the solution of the differential equation.


\subsection{Preliminary Notation}\label{sec:notation}
We briefly present the notation used throughout this Section.

$\bullet$ $Y_i^{\bf{N}}(t)$: number of infected nodes at island $i$ at time $t$, $t\geq 0$. The boldface upperscript $\mathbf{N}=\left(N_1, N_2,\ldots,N_M\right)$ stands for a vector stacking the number of nodes $N_i$ at each island $i$. For the purpose of this Section $M=2$.

$\bullet$ $\overline{Y}_i^{\bf{N}}(t)$: fraction of infected nodes at island $i$ at time $t$, $t\geq 0$, $\overline{Y}_i^{\bf{N}}(t)=Y_i^{\bf{N}}(t)/N_i$.

$\bullet$ $\mathbf{Y}^{\mathbf{N}}(t)$: vector stacking the number of infected nodes at each island $i$ at time $t$, $t\geq 0$, $\mathbf{Y}^{\mathbf{N}}(t)=\left(Y_1^{\bf{N}},\ldots, Y_M^{\bf{N}}(t)\right)$.

$\bullet$ $\overline{\mathbf{Y}}^{\mathbf{N}}(t)$: vector stacking the fraction of infected nodes at each island $i$ at time $t$, $t\geq 0$, $\mathbf{\overline{Y}}^{\mathbf{N}}(t)=\left(\overline{Y}_1^{\bf{N}},\ldots, \overline{Y}_M^{\bf{N}}(t)\right)$.

$\bullet$ $\mathcal{N}_{\alpha}:\mathcal{B}_{\left[0,\infty\right)}\rightarrow \mathbb{N}$: Poisson point process with rate $\alpha$, $\alpha>0$. $\mathcal{N}_{\alpha}\left(A\right)$ counts the number of {\em events} in the Borel-set $A\in\mathcal{B}_{\left[0,\infty\right)}$. We index the elements of a family of independent Poisson point processes by an upperscript $\mathcal{N}^{(i)}_{\alpha}$. More details are presented in the next Subsection~\ref{sec:pointprocesses}.

$\bullet$ $\mathcal{Y}^{\mathbf{N}}_i\subset \mathbb{N}^{\mathbf{N}}$: state-space of the process $\left(\mathbf{Y}^{\mathbf{N}}(t)\right)$, defined by \begin{equation}
\mathcal{Y}^{\mathbf{N}}_i=\left\{\mathbf{Y}=\left(Y_1,\ldots,Y_K\right)\in \mathbb{N}^K\,:\, 0\leq \sum_{k=1}^K Y_k\leq N_i\,\right\},\:\forall i=1,\ldots,M.\label{eq:statespace}
\end{equation}

$\bullet$ $D_{\mathbb{R}}\left[0,T\right]$: space of c\`{a}dl\`{a}g (\emph{continue \`{a} droite, limit\'{e} \`{a} gauche})   functions $f:\left[0,T\right]\rightarrow \mathbb{R}$ endowed with the Skorokhod topology, e.g., \cite{Queues}.

$\bullet$ $\mathcal{B}\left(\mathbb{R}^n\right)$: Borel $\sigma$-algebra over $\mathbb{R}^n$ with the standard topology.

$\bullet$ ${\sf Leb}\left(\cdot\right)\,:\,\mathcal{B}\left(\mathbb{R}\right)\rightarrow \left[0,\infty\right]$: Lebesgue measure over the real line $\mathbb{R}$.

$\bullet$ $\left(\overline{\mathbf{Y}}^{\mathbf{N}}(t)\right)\Rightarrow\left(\overline{\mathbf{Y}}(t)\right)$: stands for $\left(\overline{\mathbf{Y}}^{\mathbf{N}}(t)\right)$ converges weakly, in the Skorokhod topology, to the process $\left(\overline{\mathbf{Y}}(t)\right)$.

\subsection{Pathwise Representation}
\label{sec:pointprocesses}

In this Section, we provide a pathwise characterization for the macroprocess $\left(\mathbf{Y}^\mathbf{N}(t)\right)$ built upon the microscopic diffusion model. We briefly present the relevant definitions regarding point processes over the real line $\mathbb{R}$ that will be the building blocks for the pathwise description of $\left(\mathbf{Y}^\mathbf{N}(t)\right)$. For more details, refer to~\cite{Queues}.

\begin{definition}[Point measure]
$\mu\,:\,\mathcal{B}\left(\mathbb{R}\right)\rightarrow \left[0,\infty\right]$ is a point measure on $\mathbb{R}$ if there exists a sequence $a_n\in\mathbb{R}$, $n\in\mathbb{N}$, so that
\begin{equation}
\mu\left(A\right)=\#\left(A\cap \left\{a_n\right\}_{n=1}^\infty\right)=\sum_{i=1}^{\infty} \mathbf{1}_{\left\{a_i\in A\right\}},\,\,\,\,\forall A\in\mathcal{B}\left(\mathbb{R}\right),\nonumber
\end{equation}
that is, $\mu\left(A\right)$ counts the number of points of the sequence $a_n$ in $A$ for any Borelian set $A\in \mathcal{B}\left(\mathbb{R}\right)$.
\end{definition}

We represent the set of point measures on $\mathbb{R}$ as $M_p\left(\mathbb{R}\right)$. Therefore, to each point measure $\mu\in M_p\left(\mathbb{R}\right)$ there exists an underlying real sequence $\left(a_n\right)_{n\in\mathbb{N}}$. A point measure $\mu$ on $\mathbb{R}$ is called Radon, if each compact interval entails only a finite number of elements of the associated sequence, or in other words the set of accumulation points of $\left(a_n\right)$ is empty. We now define a point process on the real line $\mathbb{R}$.
\begin{definition}[Point process]
$\mathcal{N}$ is a point process if it is a Radon point measure valued random variable,
\begin{eqnarray}
\mathcal{N}\,:\,\Omega & \rightarrow & M_p\left(\mathbb{R}\right)\nonumber\\
\omega & \rightarrow & \mathcal{N}\left(\omega,\cdot\right).\nonumber
\end{eqnarray}
\end{definition}

In words, each realization $\omega\in\Omega$ leads to a sequence of points in the real line (void of accumulation points) that underlies the point measure $\mathcal{N}\left(\omega,\cdot\right)$. For simplicity, we refer to the random measure that counts the number of events in a Borel set as $\mathcal{N}\left(\cdot\right)$. Thus, $\mathcal{N}\left(A\right)$ is the random measure of $A\in\mathcal{B}\left(\mathbb{R}\right)$. We now introduce the definition of Poisson point process that will be central to building the macroscopic process $\left(\mathbf{Y}^N(t)\right)$ from the local (at node level) rules of infection.

\begin{definition}[Poisson point process]\label{def:poisson_pp}
$\mathcal{N}_\gamma$ is a Poisson point process on $\mathbb{R}$ with rate $\gamma>0$ if it satisfies the following two conditions
\begin{enumerate}
  \item {\bf [independent increments]} Given $I_1,\,I_2,\,\ldots,\,I_n\subset \mathbb{R}$, $n$ disjoint intervals in the real line then, $\mathcal{N}_\gamma\left(I_1\right),\,\mathcal{N}_\gamma\left(I_2\right),\,\ldots,\,\mathcal{N}_\gamma\left(I_n\right)$ are independent random variables.
  \item {\bf [increment stationarity]} Let $I_1,I_2\subset\mathbb{R}$ be two intervals. Then,
  \begin{equation}
  L:={\sf Leb}\left(I_1\right)={\sf Leb}\left(I_2\right)\Rightarrow \mathcal{N}_\gamma\left(I_1\right)\stackrel{\textrm d}{\sim} \mathcal{N}_{\gamma}\left(I_2\right),\nonumber
  \end{equation}
  that is, $\mathcal{N}_{\gamma}\left(I_1\right)$ and $\mathcal{N}_{\gamma}\left(I_2\right)$ are Poisson distributed random variables with rate (or intensity) parameter $\gamma L$.
\end{enumerate}
\end{definition}
Definition~\ref{def:poisson_pp} implies that the interarrival \emph{time} interval between the elements of the underlying random sequence of the Poisson point process is exponentially distributed with mean $1/\gamma$ (refer to~\cite{Probmeasurebill}). That is, the random sequence of points underlying the point process is constructed so that the time between events $T=a_{n+1}-a_n\sim {\sf Exp}\left(\gamma\right)$ is exponentially distributed. We may refer to a Poisson point process $\mathcal{N}_\gamma$ with rate $\gamma>0$ as $\gamma$-Poisson process.

If a process~$Z(t)$ counts the number of events up to time~$t$, $t\geq 0$, from a Poisson source with rate~$\gamma$, then,
\begin{equation}
Z(t)=\mathcal{N}_\gamma\left(\left.\left(0,t\right.\right]\right).\nonumber
\end{equation}
As a concrete example, consider a permanently infected node at island~$1$ from Figure~\ref{fig:dedicated}. Then, according to the nearest-neighbor infection model described at the end of Section~\ref{sec:problemformulation}, $Z(t)$ counts the number of infections that arrives at island~$2$ due to this permanently infected source up to time $t$, $t\geq 0$, if we consider $\gamma_{12}=\gamma$. If instead of a single permanently infected node we had two, then the rate would double and the process counting the number of infections arriving at island~$2$ would still be Poisson given by
\begin{equation}
Z(t)=\mathcal{N}_{2\gamma}\left(\left.\left(0,t\right.\right]\right).\nonumber
\end{equation}
If the process $\left(Z(t)\right)$ counts the number of events from a discrete-time varying Poisson source, that is, if the real line $\mathbb{R}$ can be partitioned into intervals $\cup_{i=1}^\infty I_i=\mathbb{R}$ so that at each time interval $I_i$ there is a $\gamma_i$-Poisson source $\mathcal{N}^{\left(i\right)}$ acting then
\begin{equation}\label{eq:poisson}
Z(t)=\sum_{i=1}^\infty \mathcal{N}^{\left(i\right)}_{\gamma_i}\left(I_i\cap\left.\left(0,t\right.\right]\right).
\end{equation}
From equation~(\ref{eq:poisson}) and the fact that $\mathcal{N}^{\left(i\right)}_{\gamma_i}$ are random measures, we have the following integral characterization for the process $\left(Z(t)\right)$
\begin{equation}
Z(t)=\sum_{i\,:I_i\cap\left.\left(0,t\right.\right]\neq \emptyset} \int_{0}^{t}\mathbf{1}_{\left\{s\in I_i\right\}}\mathcal{N}^{\left(i\right)}_{\gamma_i}\left(ds\right),\nonumber
\end{equation}
where the integrals are taken with respect to the respective Poisson random measures $\mathcal{N}^{\left(i\right)}_{\gamma_i}$. That is, for each realization $\omega\in \Omega$, $\mathcal{N}^{\left(i\right)}_{\gamma_i}\left(\omega, \cdot\right)$ is a measure over $\mathbb{R}$, and that is the measure under which the integral is defined. For the virus spread case, we can consider that now the number of infected nodes at island~$1$ changes over time (instead of the static example with permanently infected sources). In this case, the process $\left(Z(t)\right)$ counting the number of infections arriving at island~$2$ due to infected nodes in island~$1$ is given by
\begin{equation}\label{eq:infection_integral2}
Z(t)=\sum_{k=1}^{N_1} \int_{0}^{t}\mathbf{1}_{\left\{Y^\mathbf{N}_1(s-)=k\right\}}\mathcal{N}^{\left(k\right)}_{k\gamma}\left(ds\right),
\end{equation}
where, in this case, $\mathcal{N}^{\left(1\right)}_{\gamma},\,\mathcal{N}^{\left(2\right)}_{2\gamma},\ldots,\,\mathcal{N}^{\left(N_1\right)}_{N_1\gamma}$ are \textbf{independent} Poisson processes.
That is, we partitioned the real line up to time $t$, $\left(0,t\right]=\bigcup_{k=1}^{N_1}\left\{0< s\leq t\,:\,Y_1^\mathbf{N}(s-)=k\right\}$ according to the number of infected nodes $Y_1^\mathbf{N}(s-)=k$ at island~$1$ during the time interval $\left.\left(0,t\right.\right]$. During the time intervals where island $1$ has $k$ infected nodes, $Y_1^\mathbf{N}(s-)=k$, the source of infection that strikes island $2$ is Poisson with rate $k\gamma$. Equation~(\ref{eq:infection_integral2}) represents a sample path characterization for the process $\left(Z(t)\right)$ that counts the number of infections that flow from island~$1$ towards island~$2$ up to time $t$, $t\geq 0$. Also, if each event $a_n$ from a $\gamma$-Poisson source is only counted with probability $p$ then,
\begin{equation}\label{eq:poissonprobability}
Z(t)=\mathcal{N}_{\gamma p}\left(\left.\left(0,t\right.\right]\right).
\end{equation}
Namely, in this case $\left(Z(t)\right)$ counts the number of infections from a permanently infected node from island~$1$ to~$2$ that strikes healthy nodes, assuming that a fraction of $\left(1-p\right)$ of the nodes at the sink island~$2$ are always infected.

We now consider all these effects together to build the sample path characterization of our macroprocess $\left(\mathbf{Y}^\mathbf{N}(t)\right)$. As mentioned, in this Section we look at the bipartite network single virus case. Let $\left(I_1(t)\right)$ and $\left(H_1(t)\right)$ be the processes counting the number of nodes that are or were infected (at least once) and number of healings, respectively, up to time $t$, $t\geq 0$, at island~$1$. We have
\begin{eqnarray}
Y_1^N(t)-Y_1^N(0) & = & I_1(t)- H_1(t)\\
&=& \sum_{\ell=1}^{N_1}\sum_{q=1}^{N_2} \mathcal{N}^{\left(\ell,q\right)}_{\gamma_{21} q\left(\frac{N_1-\ell}{N_1}\right)}\left(\left\{0\leq s\leq t\,:\,Y^N_1(s)=\ell,Y^N_2(s)=q\right\}\right)\\
& &-\sum_{\ell=1}^{N_1}\mathcal{N}^{\left(\ell\right)}_{\mu \ell}\left(\left\{0\leq s\leq t\,:\,Y^N_1(s)=\ell\right\}\right)\nonumber\\
& = &  \sum_{\ell= 1}^{N_1}\sum_{q= 1}^{N_2} \int_{0}^{t} {\bf 1}_{\left\{Y_{1}^{\mathbf{N}}(s-)=\ell, Y_{2}^{\mathbf{N}}(s-)=q\right\}}\left(\mathcal{N}^{\left(\ell,q\right)}_{\gamma_{21} q\left(\frac{N_1-\ell}{N_1}\right)}(ds)\right)\label{eq:bipintegral}\\
& &-\sum_{\ell= 1}^{N_1}\int_{0}^{t} {\bf 1}_{\left\{Y_{1}^{\mathbf{N}}(s-)=\ell\right\}}\left(\mathcal{N}^{\left(\ell\right)}_{\mu \ell}(ds)\right).\nonumber
\end{eqnarray}
Note that the difference between the process $\left(I_1(t)\right)$ and the process $\left(Z(t)\right)$ in equation~(\ref{eq:infection_integral2}) is that $\left(Z(t)\right)$ counts the number of arrival infections at island~$1$, which might also hit already infected nodes, whereas $\left(I_1(t)\right)$ counts the number of effective infections that hits healthy nodes and thus increases the infected population. In the latter case, one has to account for the effect described in equation~(\ref{eq:poissonprobability}), where only a fraction $\frac{N_1-k_1}{N_1}$ of nodes at island~$1$ is healthy at time $t$, if $Y_1^\mathbf{N}(t)=k_1$. For a general multipartite network, the pathwise dynamics is given by
\begin{eqnarray}
 \label{eq:stochastic33}
Y_{i}^{\mathbf{N}}(t)&=&Y_{i}^{\mathbf{N}}(0)+\underbrace{\sum_{q= 1}^{N_j}\sum_{\ell= 1}^{N_i} \int_{0}^{t}\sum_{j\sim i} {\bf 1}_{\left\{Y_{j}^{\mathbf{N}}(s-)=q, Y_{i}^{\mathbf{N}}(s-)=\ell\right\}}\left(\mathcal{N}^{\left(\ell,q\right)}_{\gamma_{ji} q\left(\frac{N_i-\ell}{N_i}\right)}(ds)\right)}_{{\scriptsize \textrm{Inter-transmission}}}\\
&&-\underbrace{\sum_{\ell= 1}^{N_i} \int_{0}^{t} {\bf 1}_{\left\{Y_i^{\mathbf{N}}(s-)=\ell\right\}}\left(\mathcal{N}^{\left(\ell\right)}_{\ell\mu_i}(ds)\right)}_{{\scriptsize \textrm{Healing}}}\nonumber,
\end{eqnarray}
where all Poisson point processes $\mathcal{N}^{\left(\ell,q\right)}_{\gamma_{ji} q\left(\frac{N_i-\ell}{N_i}\right)}$ and $\mathcal{N}^{\left(\ell\right)}_{\ell\mu_i}$ indexed by $\ell$ and $q$ are \textbf{independent}. This is an important fact from the peer-to-peer model that will be evoked latter in this Section and in Section~\ref{sec:extension}. Also, for notational simplicity, we drop from now on the upper-indexes of the Poisson processes. Observe that the inter-transmission term is the one that relies on the supertopology of the multipartite network. Next, we frame the normalized martingale term~$\left(\overline{\mathbf{M}}^\mathbf{N}(t)\right)$ hidden within the pathwise characterization of the normalized process~$\left(\mathbf{\overline{Y}}^\mathbf{N}(t)\right)$ and prove that it converges weakly to zero as the network $G^{\mathbf{N}}$ grows large. This loosely implies that the randomness of the process $\left(\mathbf{\overline{Y}}^\mathbf{N}(t)\right)$ dies out as the number of agents grows.
\subsection{Martingale Vanishes}
\label{sec:martingalevanishes}

We start by characterizing the martingale term~$\left(\mathbf{M}^\mathbf{N}(t)\right)$ of our macroprocess~$\left(\mathbf{Y}^\mathbf{N}(t)\right)$
\begin{equation}
\mathbf{Y}^\mathbf{N}(t)=\mathbf{Y}^\mathbf{N}(0)+\mathbf{M}^\mathbf{N}(t)+\int_0^t F\left(\mathbf{Y}^\mathbf{N}(s-)\right)ds\nonumber
\end{equation}
and afterwards we explore its structure to prove that its normalized counterpart
\begin{equation}
\left(\mathbf{\overline{M}}^\mathbf{N}(t)\right):=\left(M_1^\mathbf{N}(t)/N_1,M_2^\mathbf{N}(t)/N_2\right)\nonumber
\end{equation}
converges weakly to~$0$ as the number of nodes at each island goes to infinite with the number of islands~$M=2$ kept fixed. In words, this means that the randomness of the normalized macroprocess~$\left(\mathbf{\overline{Y}}^\mathbf{N}(t)\right)$ dies out as the network grows large. We start by showing that $\mathbf{\overline{M}}^\mathbf{N}(t)$ converges to zero in $\mathcal{L}_2$ for all $t$, $t\geq 0$. Then, by Doob's inequality, this will imply that it converges in probability, in the Skorokhod space of c\`{a}dl\`{a}g sample paths, to zero, as will be clearer momentarily. Finally, this implies that the martingale converges weakly to zero. As for the rest of Section~\ref{sec:meanfield}, we concentrate on the case of single virus spread over a bipartite network. The stochastic vector process~$\left(\mathbf{Y}^{\mathbf{N}}(t)\right)$ over the bipartite network admits the following pathwise characterization:
\begin{eqnarray}
Y_{i}^{\mathbf{N}}(t)&=&Y_{i}^{\mathbf{N}}(0)+\underbrace{\sum_{\ell=1}^{N_i}\sum_{q=1}^{N_j} \int_{0}^{t} {\bf 1}_{\left\{Y_{i}^{\mathbf{N}}(s-)=\ell, Y_{j}^{\mathbf{N}}(s-)=q\right\}}\left(\mathcal{N}^{\left(\ell,q\right)}_{\gamma_{ji} q\left(\frac{N_i-\ell}{N_i}\right)}(ds)\right)}_{{\scriptsize {\sf Inter-transmission}}}\label{eq:stochastic3}\\
&&-\underbrace{\sum_{\ell=1}^{N_i} \int_{0}^{t} {\bf 1}_{\left\{Y_i^{\mathbf{N}}(s-)=\ell\right\}}\left(\mathcal{N}^{\left(\ell\right)}_{\ell\mu_i}(ds)\right)}_{{\scriptsize {\sf Healing}}},\nonumber
\end{eqnarray}
for~$i,j\in\left\{1,2\right\}$ and~$i\neq j$. The \emph{inter-transmission} term in equation~(\ref{eq:stochastic3}) accounts for the number of infections transmitted from island $j$ to healthy nodes in island~$i$ up to time~$t$,~$t\geq 0$. The \emph{healing} term accounts for the number of healings that occur in island~$i$ during the time interval~$\left[0,t\right]$. One can check that almost surely the normalized process $\overline{Y}_i^{\mathbf{N}}(t)=Y_i^{\mathbf{N}}(t)/N_i\in\left[0,1\right],\,\,\,\forall{t\geq 0}$ and all $i=1,2$, if $\overline{Y}^N_i(0)\in\left[0,1\right]$ a.s., that is, the set $\left[0,1\right]\times\left[0,1\right]$ is invariant for the stochastic dynamics of $\left(\mathbf{\overline{Y}}^{\mathbf{N}}(t)\right)$ (refer to \cite{paper:CDC}). In words, the underlying stochastic dynamics are consistent with our intuition about the underlying meaning of $\left(\overline{Y}_i^\mathbf{N}(t)\right)$ that it is the fraction of infected nodes at island~$i$, and so it is clearly a quantity between $0$ and $1$, for all $t$, $t\geq 0$. The next Theorem states that equation~(\ref{eq:stochastic3}) can be further decomposed as equation~(\ref{eq:decomposed}) into a martingale $\left(\mathbf{\overline{M}}^{\mathbf{N}}(t)\right)$ plus a drift term, with an explicit characterization for the martingale term provided in the proof.
\begin{theorem}\label{th:dynkins}\textbf{(Process Decomposition)}
Let~$\left(Y_i^{\mathbf{N}}(t)\right)$ be the number of infected nodes at island~$i$ at time~$t$ for a bipartite network with two islands~$1$ and~$2$. Let~$\gamma_{ji}$ be the rate at which a node from island~$j$ attempts to infect a node in island~$i$. Then,~$\left(Y_i^{\mathbf{N}}(t)\right)$ admits the following pathwise characterization:
\begin{equation}
Y_i^{\mathbf{N}}(t)=Y_i^{\mathbf{N}}(0)+M_i^{\mathbf{N}}(t)+\int_{0}^{t}\gamma_{ji} Y_j^{\mathbf{N}}(s-)\left(\frac{N_i-Y_i^{\mathbf{N}}(s-)}{N_i}\right)ds-\int_{0}^{t}\mu_i Y_i^{\mathbf{N}}(s-)ds,\label{eq:decomposed}
\end{equation}
for~$i,j=1,2$, with~$i\neq j$, where~$\left(M_i^{\mathbf{N}}(t)\right)$ is a martingale and~$N_i$ is the number of nodes in island~$i$.
\end{theorem}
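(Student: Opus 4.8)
The plan is to obtain (\ref{eq:decomposed}) from the pathwise representation (\ref{eq:stochastic3}) by \emph{centering} each Poisson random measure around its intensity, reading the drift off the intensities and collecting the centered integrals into the martingale term. The one external fact I would invoke is the following martingale-measure property of Poisson processes (see~\cite{Queues}): for a homogeneous Poisson point process $\mathcal{N}_\alpha$ of rate $\alpha>0$, the centered measure $\tilde{\mathcal{N}}_\alpha(ds):=\mathcal{N}_\alpha(ds)-\alpha\,ds$ is such that, for any bounded, left-continuous (hence predictable) and $\left(\mathcal{F}^{\mathbf{N}}_t\right)$-adapted integrand $H$, the process $t\mapsto\int_0^t H(s)\,\tilde{\mathcal{N}}_\alpha(ds)$ is an $\left(\mathcal{F}^{\mathbf{N}}_t\right)$-martingale.

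First I would add and subtract the intensity of each Poisson measure in (\ref{eq:stochastic3}): the intensity of $\mathcal{N}^{\left(\ell,q\right)}_{\gamma_{ji} q\left(\frac{N_i-\ell}{N_i}\right)}$ is the constant $\gamma_{ji}\,q\,(\frac{N_i-\ell}{N_i})$, and that of $\mathcal{N}^{\left(\ell\right)}_{\ell\mu_i}$ is $\ell\mu_i$. This splits each term into a centered integral against $\tilde{\mathcal{N}}$ plus a Lebesgue (drift) integral. The sums in the drift then collapse: at every time $s$ the events $\{Y_i^{\mathbf{N}}(s-)=\ell,\,Y_j^{\mathbf{N}}(s-)=q\}$ indexed by $(\ell,q)$ are disjoint and exhaustive, so exactly one indicator is nonzero and one may substitute $\ell\mapsto Y_i^{\mathbf{N}}(s-)$, $q\mapsto Y_j^{\mathbf{N}}(s-)$, giving
$$\sum_{\ell=1}^{N_i}\sum_{q=1}^{N_j}\int_0^t \mathbf{1}_{\{Y_i^{\mathbf{N}}(s-)=\ell,\,Y_j^{\mathbf{N}}(s-)=q\}}\,\gamma_{ji}\,q\,\Big(\tfrac{N_i-\ell}{N_i}\Big)\,ds=\int_0^t \gamma_{ji}\,Y_j^{\mathbf{N}}(s-)\,\Big(\tfrac{N_i-Y_i^{\mathbf{N}}(s-)}{N_i}\Big)\,ds,$$
and likewise the healing sum collapses to $\int_0^t \mu_i\,Y_i^{\mathbf{N}}(s-)\,ds$. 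These are exactly the two drift integrals in (\ref{eq:decomposed}).

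It then remains to define
$$M_i^{\mathbf{N}}(t):=\sum_{\ell=1}^{N_i}\sum_{q=1}^{N_j}\int_0^t \mathbf{1}_{\{Y_i^{\mathbf{N}}(s-)=\ell,\,Y_j^{\mathbf{N}}(s-)=q\}}\,\tilde{\mathcal{N}}^{\left(\ell,q\right)}(ds)-\sum_{\ell=1}^{N_i}\int_0^t \mathbf{1}_{\{Y_i^{\mathbf{N}}(s-)=\ell\}}\,\tilde{\mathcal{N}}^{\left(\ell\right)}(ds)$$
and to verify it is a martingale. Each integrand is the indicator of an $\mathcal{F}^{\mathbf{N}}_{s-}$-measurable event, hence left-continuous, adapted, predictable and bounded by $1$, while the driving centered measures have bounded constant intensities since $0\le\ell\le N_i$ and $0\le q\le N_j$. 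By the quoted fact each summand is an $L^2$-martingale, and $M_i^{\mathbf{N}}$, a finite sum of martingales, is a martingale; substituting the collapsed drift and this $M_i^{\mathbf{N}}$ back into the split form of (\ref{eq:stochastic3}) yields (\ref{eq:decomposed}).

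The step I expect to be the main obstacle is the rigorous martingale verification, namely justifying that the centered integrals are \emph{true} martingales rather than merely local ones. The delicate point is that the integrand is an indicator of the \emph{same} jump process that the Poisson measure drives; this is resolved precisely by using the left limits $Y^{\mathbf{N}}(s-)$, which render the integrand predictable, together with the independent-increments property (the increment of $\mathcal{N}$ over $(s,s+ds]$ is independent of $\mathcal{F}^{\mathbf{N}}_{s-}$) and the mutual independence of the families $\mathcal{N}^{\left(\ell,q\right)}$ and $\mathcal{N}^{\left(\ell\right)}$ emphasized after (\ref{eq:stochastic3}). The uniform bound $0\le Y_i^{\mathbf{N}}\le N_i$ then upgrades local to genuine $L^2$-martingales and removes any integrability concern.
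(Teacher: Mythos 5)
Your proposal is correct and follows essentially the same route as the paper's proof: compensate each Poisson random measure by its intensity, invoke the stability of the compensated-Poisson martingale property under integration of bounded predictable (left-limit) integrands, and identify the drift by collapsing the sums over the disjoint, exhaustive events $\{Y_i^{\mathbf{N}}(s-)=\ell,\,Y_j^{\mathbf{N}}(s-)=q\}$. The only difference is presentational — you carry out the drift collapse and the $L^2$/predictability justification explicitly, whereas the paper leaves the sums uncollapsed in its equation~(\ref{eq:stochastic2}) and cites the integration-stability fact without elaboration.
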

\begin{proof}
This follows as a Corollary to Dynkin's Lemma (see~\cite{Diffusion}), but we will provide an explicit characterization for the martingale $\left(M_1^{\mathbf{N}}(t),M_2^{\mathbf{N}}(t)\right)$ resulting from the microscopical diffusion model. It is easy to check that the well-known compensated Poisson process
\begin{equation}
M(t)=\mathcal{N}_{\gamma}(\left.\left(0,t\right.\right])-\gamma t\nonumber
\end{equation}
is a martingale adapted to the natural filtration~$\Sigma_{t}:=\sigma\left\{\mathcal{N}_\gamma(s)\,:\,0\leq s\leq t\right\}$ of~$\left(\mathcal{N}_\gamma(t)\right)$. Now, by compensating the Poisson point processes involved in equations~(\ref{eq:stochastic3}), we obtain
\begin{eqnarray}
Y_i^\mathbf{N}(t)&=&Y_i^\mathbf{N}(0)
+\underbrace{\sum_{\ell= 1}^{N_i}\sum_{q= 1}^{N_j} \int_{0}^{t} {\bf 1}_{\left\{Y_{j}^{\mathbf{N}}(s-)=q, Y_{i}^{\mathbf{N}}(s-)=\ell\right\}}\left(\mathcal{N}^{\left(\ell,q\right)}_{\gamma_{ji} q\left(\frac{N_i-\ell}{N_i}\right)}(ds)-\gamma_{ji} q\left(\frac{N_i-\ell}{N_i}\right)ds\right)}_{M^{\mathbf{N}}_{ai}(t)}
\nonumber
\\
\label{eq:stochastic2}
&&-\underbrace{\sum_{k= 1}^{N_i} \int_{0}^{t} {\bf 1}_{\left\{Y_i^{\mathbf{N}}(s-)=k\right\}}\left(\mathcal{N}_{k\mu_i}(ds)-k\mu_i ds\right)}_{M_{bi}^{\mathbf{N}}(t)}
\\
&&+\sum_{\ell= 1}^{N_i}\sum_{q= 1}^{N_j} \int_{0}^{t} {\bf 1}_{\left\{Y_{j}^{\mathbf{N}}(s-)=q, Y_{i}^{\mathbf{N}}(s-)=\ell\right\}}\gamma_{ji} q\left(\frac{N_i-\ell}{N_i}\right) ds\nonumber
-\sum_{k= 1}^{N_i} \int_{0}^{t} {\bf 1}_{\left\{Y_i^{\mathbf{N}}(s-)=k\right\}} k\mu_i ds.
\end{eqnarray}
It turns out that the martingale property of a compensated Poisson process is stable under the integration of a predictable process, \cite{Queues}, in particular, if $\left(Y(t)\right)$ is a c\`{a}dl\`{a}g adapted process, then
\begin{equation}
M(t)=\int_{0}^{t} Y(s-)\left(\mathcal{N}_{\lambda}(ds)-\lambda ds\right)\nonumber
\end{equation}
is a martingale adapted to the natural filtration~$\sigma\left\{\mathcal{N}_\lambda(s)\,:\,0\leq s\leq t\right\}$ (refer to \cite{Diffusion2}). Moreover, the space $\mathcal{M}$ of martingales adapted to the same filtration conforms a vector space and, therefore, $M_{ai}^{\mathbf{N}}(t)$ and $M_{bi}^{\mathbf{N}}(t)$ are martingales and
\begin{equation}
\label{eq:martingale}
M_i^{\mathbf{N}}(t)=M_{ai}^{\mathbf{N}}(t)-M_{bi}^{\mathbf{N}}(t)
\end{equation}
is a martingale. Therefore, the terms $M_{ai}^{\mathbf{N}}(t)$ and $M_{bi}^{\mathbf{N}}(t)$ in equation~(\ref{eq:stochastic2}) are martingales with respect to the natural filtration of the underlying Poisson point processes $\mathcal{N}^{\left(\ell,q\right)}_{\gamma_{ji} q\left(\frac{N_i-\ell}{N_i}\right)}$ and $\mathcal{N}^{\left(\ell\right)}_{\ell \mu_i}$, respectively, for $\ell\leq N_i$ and $q\leq N_j$.
Theorem~\ref{th:dynkins} is completed and the martingale term~$\left(\mathbf{M}^{\mathbf{N}}(t)\right)$ is characterized.
\end{proof}

Next, we prove that the variance of the normalized zero-mean martingales~$\left(\overline{M}_i^{\mathbf{N}}(t)\right)=\left(M_i^{\mathbf{N}}(t)/N_i\right)$ converge to zero (as $\mathbf{N}$ grows large) for all~$t$,~$t\geq 0$, and~$i=1,2$, that is $\overline{M}_i^{\mathbf{N}}(t)$ converges to zero in $\mathcal{L}_2$, for all time $t$.

\begin{theorem}
\label{th:martingalemorre}
Let $\left(\overline{M}_i^\mathbf{N}(t)\right)=\left(\overline{M}_{ai}^\mathbf{N}(t)-\overline{M}_{bi}^\mathbf{N}(t)\right)$ be described as in equation~(\ref{eq:stochastic2}). We have
\begin{equation}
{\sf Var}\left(\overline{M}_i^\mathbf{N}(t)\right)={\sf E}\left(\overline{M}_i^\mathbf{N}(t)\right)^2\longrightarrow 0,\,\,\,\forall{t\geq0}\mbox{ and $i=1,2$}\nonumber
\end{equation}
as $N_i\rightarrow\infty$ for $i=1,2$ with $N_i/N_j\rightarrow \alpha_{ji}<\infty$.
\end{theorem}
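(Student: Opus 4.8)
The plan is to compute the second moment ${\sf E}\left[(\overline{M}_i^\mathbf{N}(t))^2\right]$ --- which equals the variance since each compensated-Poisson integral is a zero-mean martingale --- directly through the $\mathcal{L}_2$-isometry for stochastic integrals against compensated Poisson random measures, and then to bound the result crudely using $Y_i^\mathbf{N}\le N_i$ and $Y_j^\mathbf{N}\le N_j$. Recall from equation~(\ref{eq:stochastic2}) that $M_i^\mathbf{N}=M_{ai}^\mathbf{N}-M_{bi}^\mathbf{N}$, with both pieces built from the \textbf{independent} Poisson point processes of the peer-to-peer model. The single tool I need is the isometry: for a c\`adl\`ag adapted (hence predictable) integrand $H$ and a rate-$\lambda$ compensated Poisson integral,
\[
{\sf E}\left[\left(\int_{0}^{t}H(s-)\left(\mathcal{N}_{\lambda}(ds)-\lambda\,ds\right)\right)^{2}\right]=\lambda\int_{0}^{t}{\sf E}\left[H(s-)^{2}\right]ds,
\]
which follows from the predictable quadratic variation of a compensated Poisson process.

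First I would handle the healing martingale $M_{bi}^\mathbf{N}$, whose summands are indexed by $k$, driven by the independent processes $\mathcal{N}_{k\mu_i}$, and gated by the indicators $\mathbf{1}_{\{Y_i^\mathbf{N}(s-)=k\}}$. Because these indicators have pairwise disjoint support in $s$ (at each instant $Y_i^\mathbf{N}(s-)$ equals exactly one value) and the driving processes are independent, every cross term in the expansion of the square vanishes. Applying the isometry termwise and collapsing the sum through $\sum_{k}k\,\mathbf{1}_{\{Y_i^\mathbf{N}(s-)=k\}}=Y_i^\mathbf{N}(s-)$ gives ${\sf E}[(M_{bi}^\mathbf{N}(t))^2]=\mu_i\int_0^t{\sf E}[Y_i^\mathbf{N}(s-)]\,ds\le\mu_i N_i t$, so after normalization ${\sf E}[(\overline{M}_{bi}^\mathbf{N}(t))^2]\le\mu_i t/N_i\to 0$.

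The infection martingale $M_{ai}^\mathbf{N}$ is treated identically, with summands indexed by the pair $(\ell,q)$ and rates $\lambda_{\ell,q}=\gamma_{ji}q\,(N_i-\ell)/N_i$. The indicators $\mathbf{1}_{\{Y_j^\mathbf{N}(s-)=q,\,Y_i^\mathbf{N}(s-)=\ell\}}$ again partition time and the driving processes are independent, so the cross terms drop out; collapsing the double sum via $\sum_{\ell,q}q\,\frac{N_i-\ell}{N_i}\mathbf{1}_{\{Y_j^\mathbf{N}(s-)=q,\,Y_i^\mathbf{N}(s-)=\ell\}}=Y_j^\mathbf{N}(s-)\,\frac{N_i-Y_i^\mathbf{N}(s-)}{N_i}$ yields ${\sf E}[(M_{ai}^\mathbf{N}(t))^2]=\gamma_{ji}\int_0^t{\sf E}[Y_j^\mathbf{N}(s-)\,\frac{N_i-Y_i^\mathbf{N}(s-)}{N_i}]\,ds\le\gamma_{ji}N_j t$, and hence ${\sf E}[(\overline{M}_{ai}^\mathbf{N}(t))^2]\le\gamma_{ji}t\,N_j/N_i^2$. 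Since the infection and healing families are independent, $M_{ai}^\mathbf{N}$ and $M_{bi}^\mathbf{N}$ are orthogonal, so ${\sf E}[(\overline{M}_i^\mathbf{N}(t))^2]={\sf E}[(\overline{M}_{ai}^\mathbf{N}(t))^2]+{\sf E}[(\overline{M}_{bi}^\mathbf{N}(t))^2]$. Writing $N_j/N_i^2=(N_j/N_i)(1/N_i)$ and invoking $N_i/N_j\to\alpha_{ji}<\infty$ (so that $N_j/N_i$ stays bounded) forces the infection term to zero as well, and the total variance vanishes.

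I expect the one point requiring care to be the rigorous vanishing of the cross terms: it rests on simultaneously using the independence of distinct Poisson point processes \emph{and} the fact that the indicator windows $\{Y_i^\mathbf{N}(s-)=k\}$ (resp.\ $\{Y_j^\mathbf{N}(s-)=q,\,Y_i^\mathbf{N}(s-)=\ell\}$) tile the time axis, so that no two distinct summands are ever active at once. Once the double sums collapse, the bounds $Y_i^\mathbf{N}\le N_i$ and $Y_j^\mathbf{N}\le N_j$ make the remaining estimates immediate, and it is worth noting that only the inter-transmission term actually consumes the ratio hypothesis $N_i/N_j\to\alpha_{ji}$; the healing term decays at the faster rate $1/N_i$ on its own.
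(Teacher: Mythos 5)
Your proposal is correct and follows essentially the same route as the paper: variance equals second moment by the martingale (zero-mean) property, cross terms vanish by orthogonality of integrals against independent compensated Poisson processes combined with the disjointness of the indicator windows, the It\^{o} isometry gives the diagonal terms, and the crude bounds $Y_i^{\mathbf{N}}\le N_i$, $Y_j^{\mathbf{N}}\le N_j$ together with the ratio hypothesis $N_i/N_j\to\alpha_{ji}<\infty$ finish the estimate. The only cosmetic difference is that you collapse the double sum to an exact expression before bounding, whereas the paper bounds inside the sum; both yield the same $O(1/N_i)$ rate.
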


\begin{proof}
Note first that for $s<t$
\begin{equation}
{\sf E}\left(M_i^{\mathbf{N}}(t)\right)={\sf E}\left({\sf E}\left(M_i^{\mathbf{N}}(t)|\mathcal{F}_s\right)\right)={\sf E}\left(M_i^{\mathbf{N}}(s)\right).\nonumber
\end{equation}
So, $\left(M_i^{\mathbf{N}}(t)\right)$ is zero-mean, ${\sf E}\left(M_i^{\mathbf{N}}(t)\right)={\sf E}\left(M_i^{\mathbf{N}}(0)\right)=0$ for all $t\geq 0$; thus, ${\sf Var}\left(\overline{M}_i^{\mathbf{N}}(t)\right)={\sf E}\left(\overline{M}_i^{\mathbf{N}}(t)\right)^2$. Before proceeding, we state that orthogonality of compensated Poisson martingales is preserved under integration of predictable processes, which will be important to complete the proof of Theorem~\ref{th:martingalemorre}.
\begin{theorem}[Orthogonality under Integration]\label{th:ortho}
Let: $\left(\mathcal{N}_{\gamma_1}(t)\right)$ and $\left(\mathcal{N}_{\gamma_2}(t)\right)$ be two independent Poisson processes; $M_1(t)=\mathcal{N}_{\gamma_1}(t)-\gamma_1 t$ and $M_2(t)=\mathcal{N}_{\gamma_2}(t)-\gamma_2 t$ be the corresponding compensated martingales; $\left(F_1(t)\right)$ and $\left(F_2(t)\right)$ be almost surely bounded predictable processes with respect to the natural filtrations $\left(\sigma\left(M_1(t);t\leq s\right)\right)_{s\geq 0}$ and $\left(\sigma\left(M_2(t);t\leq s\right)\right)_{s\geq 0}$, respectively. Then,
\begin{equation}
\underbrace{\left(\int_0^t F_1(s)dM_1(s)\right)}_{M_a(t)}\underbrace{\left(\int_0^t F_2(s)dM_2(s)\right)}_{M_b(t)}\mbox{ is a martingale.}\nonumber
\end{equation}
\end{theorem}

\begin{proof}
Refer to \cite{Queues}.
\end{proof}

In particular, it follows as a Corollary to Theorem~\ref{th:ortho} that
\begin{eqnarray}
\label{eq:ortogo1}
E\left[\left(\int_0^t F_1(s)dM_1(s)\right)\left(\int_0^t F_2(s)dM_2(s)\right)\right] &\!\!\!\! =\!\!\!\! & E\left[\left(\int_0^0 F_1(s)dM_1(s)\right)
\left(\int_0^0 F_2(s)dM_2(s)\right)\right]
\\
& \!\!\!\! =\!\!\!\! & 0,\,\,\forall{t\geq 0}.\label{eq:ortogo2}
\end{eqnarray}
In words, the resulting integral martingales $\left(M_{ai}(t)\right)$ and $\left(M_{bi}(t)\right)$ are orthogonal.
Now, back to the proof of Theorem~\ref{th:martingalemorre}.
\begin{eqnarray}
{\sf E}\left(M_{ai}^{\mathbf{N}}(t)\right)^2&=& {\sf E}\left( \sum_{\ell= 1}^{N_i}\sum_{q= 1}^{N_j}\int_{0}^{t} {\bf 1}_{\left\{Y_{j}^{\mathbf{N}}(s-)=q, Y_{i}^{\mathbf{N}}(s-)=\ell\right\}}\left(\mathcal{N}_{\gamma_{ji} q\left(\frac{N_i-\ell}{N_i}\right)}(ds)-\gamma_{ji}\right. q\left(\frac{N_i-\ell}{N_i}\right)ds\right)^2\nonumber\\
&=&\sum_{\ell= 1}^{N_i}\sum_{q= 1}^{N_j}{\sf E}\left(\int_{0}^{t} {\bf 1}_{\left\{Y_{j}^{\mathbf{N}}(s-)=q, Y_{i}^{\mathbf{N}}(s-)=\ell\right\}}\left(\mathcal{N}_{\gamma_{ji} q\left(\frac{N_i-\ell}{N_i}\right)}(ds)-\gamma_{ji} q\left(\frac{N_i-\ell}{N_i}\right)ds\right)\right)^2\nonumber\\
&=&\sum_{\ell= 1}^{N_i}\sum_{q= 1}^{N_j}{\sf E}\left(\int_{0}^{t}{\bf 1}_{\left\{Y_{j}^{\mathbf{N}}(s-)=q, Y_{i}^{\mathbf{N}}(s-)=\ell\right\}}\gamma_{ji} q\left(\frac{N_i-\ell}{N_i}\right)ds\right)\nonumber\\
&\leq&{\sf E}\left(\int_{0}^{t}\sum_{\ell= 1}^{N_i}\sum_{q= 1}^{N_j}{\bf 1}_{\left\{Y_{j}^{\mathbf{N}}(s-)=q, Y_{i}^{\mathbf{N}}(s-)=\ell\right\}}\gamma_{ji} N_jds\right)\nonumber\\
&\leq&\gamma_{ji}N_jt,\nonumber
\end{eqnarray}
where the second equality follows from Theorem~\ref{th:ortho}, remarking that all Poisson processes involved, indexed by $\ell$ and $q$, are independent and thus the integral martingales are orthogonal and the cross-terms cancel out. The third equality is due to the It\^{o} isometry Theorem (refer to~\cite{Diffusion2} or~\cite{Karatzas}) and the fact that the quadratic variation of a compensated Poisson martingale is given by $\left\langle\mathcal{N}_{\gamma}(t)-\gamma t \right\rangle=\gamma t$. The last inequality holds since the family of sets $I_{\ell,q}:=\left\{s\in\mathbb{R}\,:\,Y_{j}^{\mathbf{N}}(s-)=q, Y_{i}^{\mathbf{N}}(s-)=\ell\right\}$, indexed by $\ell,q$, are disjoint and thus
\begin{equation}
\sum_{\ell= 1}^{N_i}\sum_{q= 1}^{N_j}{\bf 1}_{\left\{Y_{j}^{\mathbf{N}}(s-)=q, Y_{i}^{\mathbf{N}}(s-)=\ell\right\}}=
{\bf 1}_{\bigcup_{\ell,q}{\left\{Y_{j}^{\mathbf{N}}(s-)=q, Y_{i}^{\mathbf{N}}(s-)=\ell\right\}}}\leq {\bf 1}_{\left[0,1\right]}(s).\nonumber
\end{equation}
Therefore,
\begin{equation}
{\sf E}\left(\overline{M}_{ai}^{\mathbf{N}}(t)\right)^2=\frac{1}{N_i^2}{\sf E}\left(M_{ai}^{\mathbf{N}}(t)\right)^2\leq \frac{1}{N_i}\frac{\gamma_{ji}}{4}\left(\frac{N_j}{N_i}\right)t.\nonumber
\end{equation}
Thus,
\begin{equation}
{\sf E}\left(\overline{M}_{ai}^{\mathbf{N}}(t)\right)^2\longrightarrow 0\nonumber
\end{equation}
as $N_i\rightarrow \infty$ and $N_j\rightarrow \infty$, and $\frac{N_j}{N_i}\rightarrow \alpha_{ij}<\infty$. Similarly, the variance of the martingale $\left(\overline{M}_{bi}^{\mathbf{N}}(t)\right)$ converges to zero. Thus, the martingale vanishes in $\mathcal{L}_2$ with $O\left(1/N_i\right)$.
\end{proof}

Since $\left(\overline{M}_i^{\mathbf{N}}(t)\right)$ is a martingale, from Doob's inequality:
\begin{equation}
P\left(\sup_{0\leq t\leq T}\left|\overline{M}_i^{\mathbf{N}}(t)\right|>\epsilon\right)\leq \frac{{\sf E}\left(\overline{M}_i^{\mathbf{N}}(T)\right)^2}{\epsilon^2}\longrightarrow 0,\,\,\,\forall\epsilon>0,\,\,\,\forall T\geq0.\label{eq:Doobsinequality}
\end{equation}
That is, $\left(\overline{M}_i^\mathbf{N}(t)\right)$ converges to zero in probability in the space of \emph{c\`{a}dl\`{a}g} paths with the \emph{sup} norm.

The next Theorem is an extension for stochastic processes of the statement that convergence in probability implies convergence in distribution for real valued random variables. The Theorem is is Proposition $C.5$ from~\cite{Queues},  and it will imply that $\left(\mathbf{\overline{M}}^{\mathbf{N}}(t)\right)$ converges weakly (in the Skorokhod topology) to $0$. That is, $P_{\overline{\mathbf{M}}^{\mathbf{N}}}$, the probability measure over $D_{\mathbb{R}}\left[0,T\right]$ induced by $\left(\mathbf{\overline{M}}^{\mathbf{N}}(t)\right)$, converges weakly to $\delta_0$ (Dirac measure about $m(t)=0$).
\begin{theorem}
\label{th:probimplyweak}
Let $\left(Z^N(t)\right)$ be a sequence of c\`{a}dl\`{a}g stochastic processes on the interval $\left[0,T\right]$ such that,
\begin{equation}
\label{eq:doobs2}
P\left(\sup_{0\leq t\leq T}\left|Z^{N}(t)-z(t)\right|>\epsilon\right)\longrightarrow 0,\,\,\,\forall\epsilon>0,\nonumber
\end{equation}
where $\left(z(t)\right)$ is a deterministic c\`{a}dl\`{a}g function over $\left[0,T\right]$. Then,
\begin{equation}
\left(Z^N(t)\right)\Rightarrow \left(z(t)\right)\mbox{ on $\left[0,T\right]$},\nonumber
\end{equation}
i.e., $\left(Z^N(t)\right)$ converges weakly, for the Skorokhod topology, to $\left(z(t)\right)$. In other words, the sequence of probability measures $P_{Z^{N}}$ over $D_{\mathbb{R}}\left[0,T\right]$ induced by $\left(Z^N(t)\right)$ converges weakly to $\delta_{\left(z(t)\right)}$ (Dirac measure about $\left(z(t)\right)$).
\end{theorem}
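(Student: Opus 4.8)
The plan is to reduce the statement to the elementary fact that, in any metric space, convergence in probability to a \emph{deterministic} limit implies weak convergence, and to bridge the uniform hypothesis with the Skorokhod conclusion by exploiting that the Skorokhod metric is dominated by the uniform one. First, recall that the Skorokhod topology on $D_{\mathbb{R}}\left[0,T\right]$ is metrized by
\begin{equation}
d_S(f,g)=\inf_{\lambda\in\Lambda}\max\left\{\sup_{0\le t\le T}\left|\lambda(t)-t\right|,\ \sup_{0\le t\le T}\left|f(t)-g(\lambda(t))\right|\right\},\nonumber
\end{equation}
where $\Lambda$ is the set of strictly increasing continuous bijections of $\left[0,T\right]$. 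Choosing the identity time-change $\lambda=\mathrm{id}$ in the infimum immediately yields $d_S(f,g)\le \sup_{0\le t\le T}\left|f(t)-g(t)\right|$, so $d_S$ is dominated by the sup-norm distance.

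Second, I would use this domination to transfer the hypothesis. For every $\epsilon>0$,
\begin{equation}
P\left(d_S\left(Z^N,z\right)>\epsilon\right)\le P\left(\sup_{0\le t\le T}\left|Z^N(t)-z(t)\right|>\epsilon\right)\longrightarrow 0,\nonumber
\end{equation}
so $\left(Z^N(t)\right)\to\left(z(t)\right)$ in probability with respect to $d_S$. It then remains to promote convergence in probability to weak convergence. Since $z$ is a fixed point of the separable metric space $\left(D_{\mathbb{R}}\left[0,T\right],d_S\right)$, I would invoke the Portmanteau theorem in its open-set form: it suffices to verify $\liminf_N P\left(Z^N\in G\right)\ge \delta_z(G)$ for every open $G$. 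If $z\notin G$ this is trivial; if $z\in G$, openness supplies an $r>0$ with the Skorokhod ball $B_{d_S}(z,r)\subset G$, whence
\begin{equation}
P\left(Z^N\in G\right)\ge P\left(d_S\left(Z^N,z\right)<r\right)\longrightarrow 1=\delta_z(G).\nonumber
\end{equation}
This establishes the criterion, giving $P_{Z^N}\Rightarrow\delta_z$, i.e.\ $\left(Z^N(t)\right)\Rightarrow\left(z(t)\right)$.

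The only delicate point is the first step: that the hypothesis, phrased in the \emph{uniform} norm, survives passage to the coarser Skorokhod topology. This is exactly what the domination $d_S\le\|\cdot\|_\infty$ secures, and — crucially — no continuity of the limit $z$ is required, because convergence to a \emph{deterministic} target collapses weak convergence onto the single open-ball estimate above, sidestepping the usual obstructions that discontinuities of $z$ would create in the Skorokhod setting (such as the failure of continuity of the coordinate projections at discontinuity points). I would close by remarking that this is precisely the process-level analogue of the classical real-valued fact that convergence in probability implies convergence in distribution, specialized here to a constant limit; in our application $z\equiv 0$, so it yields $\left(\overline{\mathbf{M}}^{\mathbf{N}}(t)\right)\Rightarrow 0$ as desired.
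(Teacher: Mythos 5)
Your proof is correct. Note, however, that the paper does not actually prove this statement: its entire ``proof'' is a citation to Proposition C.5 of the queueing reference, so you have supplied the argument that the authors delegate. Your two-step route is the standard one behind that cited proposition: (i) the Skorokhod metric is dominated by the uniform metric (take $\lambda=\mathrm{id}$ in the infimum), so uniform convergence in probability transfers to convergence in probability in $\left(D_{\mathbb{R}}\left[0,T\right],d_S\right)$; (ii) convergence in probability to a \emph{deterministic} element of a metric space implies weak convergence, which you verify via the open-set form of Portmanteau and a single open-ball estimate. Both steps are sound, and your remark that no continuity of $z$ is needed is accurate --- the domination $d_S\le\|\cdot\|_{\infty}$ holds for arbitrary c\`{a}dl\`{a}g $z$, and the deterministic limit collapses weak convergence to the ball estimate, so the usual discontinuity issues with coordinate projections never arise. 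What your version buys over the paper's is self-containedness and transparency about exactly which hypotheses matter (only that the limit is deterministic and that the convergence is uniform in probability); what the citation buys is brevity. The only point you gloss over, harmlessly, is measurability of $d_S\left(Z^N,z\right)$ and of $\left\{Z^N\in G\right\}$, which holds because the Borel $\sigma$-algebra of the Skorokhod topology on $D_{\mathbb{R}}\left[0,T\right]$ is generated by the coordinate projections, so a c\`{a}dl\`{a}g process is automatically a $D$-valued random element.
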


\begin{proof}
Refer to proposition $C.5$ in \cite{Queues}.
\end{proof}

Thus, we conclude that the martingale $\left(\overline{M}_i^{\mathbf{N}}(t)\right)$ converges weakly to zero.

\subsection{$\left(\mathbf{\overline{Y}}^{\mathbf{N}}(t)\right)$ converges weakly }
\label{sec:tight}

The next Theorem is a stochastic version of Arzel\`{a}-Ascoli and will provide sufficient conditions to guarantee the tightness of the sequence $\left(\mathbf{\overline{Y}}^{\mathbf{N}}(t)\right)$--just as Arzel\`{a}-Ascoli provides sufficient conditions to guarantee tightness of a family of (deterministic) functions.
\begin{theorem}\label{th:arzela}
Let $\left(\mathbf{\overline{Y}}^{\mathbf{N}}(t)\right)$ be a sequence of c\`{a}dl\`{a}g processes. Then, the sequence of probability measures $P^{\mathbf{N}}$ induced on $D_{\mathbb{R}}\left[0,T\right]$ by $\left(\mathbf{\overline{Y}}^{\mathbf{N}}(t)\right)$ is tight and any weak limit point of this sequence is concentrated on the subset of continuous functions $C_{\mathbb{R}}\subset D_{\mathbb{R}}$ if and only if the following two conditions hold for each $T>0$ and $\epsilon>0$:
\begin{eqnarray}
\lim_{k\rightarrow\infty}\limsup_{N\rightarrow\infty}P\left(\sup_{0\leq t\leq T} \overline{Y}^{\mathbf{N}}(t)\geq k\right) &= &0\mbox{         (Uniform Boundness)}\label{eq:arzela}\\
\lim_{\delta\rightarrow 0}\limsup_{N\rightarrow\infty}P\left(\omega(\overline{Y}^{\mathbf{N}},\delta,T)\geq \epsilon\right)&= &0\mbox{         (Equicontinuity)}\label{eq:arzela2}
\end{eqnarray}
where we defined
\begin{equation}
\omega(x, \delta, T)=\sup\left\{\sup_{u,v\in\left[t,t+\delta\right]}|x(u)-x(v)|\,:\,0\leq t\leq t+\delta\leq T\right\}.\nonumber
\end{equation}
\end{theorem}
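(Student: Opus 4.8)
The plan is to deduce this from Prokhorov's theorem together with the deterministic Arzel\`{a}--Ascoli compactness criterion, exploiting that $D_{\mathbb{R}}[0,T]$ under the Skorokhod metric is a Polish space (see \cite{Queues}). By Prokhorov, tightness of the family $\left\{P^{\mathbf{N}}\right\}$ is equivalent to its relative compactness for weak convergence, so the statement reduces to showing that conditions (\ref{eq:arzela})--(\ref{eq:arzela2}) hold if and only if, for every $\eta>0$, there is a set $K_\eta$ that is compact in $D_{\mathbb{R}}[0,T]$, contained in $C_{\mathbb{R}}$, and satisfies $\inf_{\mathbf{N}} P^{\mathbf{N}}(K_\eta)\ge 1-\eta$. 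The crucial structural fact I would use is that the modulus appearing in (\ref{eq:arzela2}) is the ordinary modulus of continuity $\omega(x,\delta,T)$, not the Skorokhod modulus that tolerates a single jump: a subset of $D_{\mathbb{R}}[0,T]$ on which $\omega(\cdot,\delta,T)\to 0$ as $\delta\to 0$ consists of continuous functions, and on $C_{\mathbb{R}}$ the Skorokhod topology coincides with the uniform topology. This is exactly what forces the weak limit points to be concentrated on $C_{\mathbb{R}}$.

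For the sufficiency direction, I would fix $\eta>0$ and build $K_\eta$ explicitly. Using (\ref{eq:arzela}), pick $k$ with $\limsup_{\mathbf{N}} P\left(\sup_{t}\overline{Y}^{\mathbf{N}}(t)\ge k\right)<\eta/2$; using (\ref{eq:arzela2}), for each $m\in\mathbb{N}$ pick $\delta_m$ with $\limsup_{\mathbf{N}} P\left(\omega(\overline{Y}^{\mathbf{N}},\delta_m,T)\ge 1/m\right)<\eta\, 2^{-m-1}$. Then set
\[
K_\eta=\left\{x\in D_{\mathbb{R}}[0,T]\,:\, \sup_{t}|x(t)|\le k,\ \omega(x,\delta_m,T)\le 1/m\ \ \forall m\right\}.
\]
Being uniformly bounded, equicontinuous, and closed, $K_\eta$ is compact in the uniform topology by Arzel\`{a}--Ascoli, hence compact in $D_{\mathbb{R}}[0,T]$ and contained in $C_{\mathbb{R}}$. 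A union bound gives $P^{\mathbf{N}}(K_\eta^c)\le P\left(\sup_t\overline{Y}^{\mathbf{N}}\ge k\right)+\sum_m P\left(\omega(\overline{Y}^{\mathbf{N}},\delta_m,T)\ge 1/m\right)$, which is below $\eta$ for all large $\mathbf{N}$; the finitely many remaining indices are absorbed because each individual $P^{\mathbf{N}}$ is tight on the Polish space $D_{\mathbb{R}}[0,T]$ (possibly enlarging $k$ and shrinking the $\delta_m$). Thus $\left\{P^{\mathbf{N}}\right\}$ is tight and every limit point is supported on $C_{\mathbb{R}}$.

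For the necessity direction, I would run the argument backwards. Tightness plus Prokhorov yields, for each $\eta>0$, a compact $K_\eta\subset D_{\mathbb{R}}[0,T]$ with $\inf_{\mathbf{N}}P^{\mathbf{N}}(K_\eta)\ge 1-\eta$; the hypothesis that all weak limit points concentrate on $C_{\mathbb{R}}$ lets me take $K_\eta\subset C_{\mathbb{R}}$. The deterministic Arzel\`{a}--Ascoli theorem then guarantees that such a compact $K_\eta$ is uniformly bounded, say $\sup_{x\in K_\eta}\sup_t|x(t)|\le k_\eta$, and uniformly equicontinuous, $\sup_{x\in K_\eta}\omega(x,\delta,T)\to 0$ as $\delta\to 0$. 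Bounding $P\left(\sup_t\overline{Y}^{\mathbf{N}}\ge k\right)$ and $P\left(\omega(\overline{Y}^{\mathbf{N}},\delta,T)\ge\epsilon\right)$ by $P^{\mathbf{N}}(K_\eta^c)$ plus a term that vanishes once $k\ge k_\eta$, respectively once $\delta$ is small enough, yields (\ref{eq:arzela}) and (\ref{eq:arzela2}).

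The step I expect to be most delicate is the interplay between the Skorokhod and uniform topologies: one must verify that a set which is compact in $C_{\mathbb{R}}$ under the sup norm remains compact as a subset of $D_{\mathbb{R}}[0,T]$ under the Skorokhod metric, and conversely that controlling the ordinary modulus $\omega$ (rather than the jump-tolerant Skorokhod modulus) genuinely rules out discontinuities in the limit. The remaining bookkeeping---the union bound and the absorption of the finitely many small indices via individual tightness---is routine.
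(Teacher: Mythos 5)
The paper does not actually prove this statement---it is quoted verbatim from \cite{Ruth}---so there is no internal proof to compare against; judged on its own terms, your argument has a genuine gap at its central step. Your compact set $K_\eta=\{x:\sup_t|x(t)|\le k,\ \omega(x,\delta_m,T)\le 1/m\ \forall m\}$ is, as you correctly observe, contained in $C_{\mathbb{R}}$. But the prelimit processes here are pure jump processes: on the event that at least one transition occurs in $[0,T]$, the path $\overline{Y}_i^{\mathbf{N}}$ has a jump of size $1/N_i$, hence $\omega(\overline{Y}_i^{\mathbf{N}},\delta,T)\ge 1/N_i$ for \emph{every} $\delta>0$, and consequently $P^{\mathbf{N}}(C_{\mathbb{R}})=P^{\mathbf{N}}(K_\eta)\approx 0$ rather than $\ge 1-\eta$. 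Equivalently, in your union bound the series $\sum_m P\left(\omega(\overline{Y}^{\mathbf{N}},\delta_m,T)\ge 1/m\right)$ diverges for each fixed $\mathbf{N}$, since its terms are bounded below by the probability of at least one jump once $1/m<1/N_i$. The same defect sinks both of your auxiliary devices: ``absorbing the finitely many remaining indices by shrinking the $\delta_m$'' fails because $P\left(\omega(\overline{Y}^{\mathbf{N}},\delta,T)\ge\epsilon\right)$ does not tend to $0$ as $\delta\to 0$ for a fixed discontinuous process; and in the necessity direction you cannot ``take $K_\eta\subset C_{\mathbb{R}}$,'' again because the prelimit laws put no mass on $C_{\mathbb{R}}$. (There is also a secondary order-of-quantifiers slip: the indices $N_m$ beyond which the $m$-th limsup bound takes effect may be unbounded in $m$, so the sum is not controlled ``for all large $\mathbf{N}$.'')

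The standard repair is a two-step argument. Tightness in $D_{\mathbb{R}}[0,T]$ must be witnessed by $D$-compact sets built from the Skorokhod modulus $w'(x,\delta)$, which tolerates isolated jumps; condition (\ref{eq:arzela2}) transfers to $w'$ via an inequality of the form $w'(x,\delta)\le\omega(x,2\delta)$, and for $w'$ the individual-measure upgrade is legitimate because $w'(x,\delta)\to 0$ as $\delta\to 0$ for \emph{every} $x\in D_{\mathbb{R}}[0,T]$. Concentration of the weak limit points on $C_{\mathbb{R}}$ is then a separate assertion about the limits, obtained for instance by bounding the maximal jump functional $j(x)=\sup_t|x(t)-x(t-)|$ by $\omega(x,\delta,T)$ and applying the portmanteau theorem along a convergent subsequence. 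Your instinct that the ordinary modulus, as opposed to $w'$, is what forces continuity of the limits is correct; the error is in trying to make it force near-continuity of the prelimit measures as well.
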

\begin{proof}
Refer to \cite{Ruth}.
\end{proof}

From Theorems~\ref{th:martingalemorre} and~\ref{th:probimplyweak}, we have that $\left(\overline{\mathbf{M}}^{\mathbf{N}}(t)\right)\Rightarrow 0$. We are just left to show that our sequence $\left(\overline{Y}_i^{\mathbf{N}}(t)\right)$ meets the requirements in equations~(\ref{eq:arzela}) and~(\ref{eq:arzela2}) and therefore it is tight (or relatively compact). In other words, it admits a convergent subsequence $\left(\overline{Y}^{\mathbf{N}_p}(t)\right)\rightarrow \overline{Y}(t)$ where $\left(\overline{Y}(t)\right)$ is almost surely continuous. Indeed, $P\left(\sup_{0\leq t\leq T}\overline{Y}_i^{\mathbf{N}}(t)\geq k\right)=0$, $\forall k>1$, and the first condition holds trivially. The second condition is a stochastic version of the equicontinuity condition in the Arzel\`{a}-Ascoli Theorem.
\begin{eqnarray}
\label{eq:variation}
\omega\left(\overline{Y}_i^{\mathbf{N}}, \delta, T\right)&=&\sup\left\{\sup_{u,v\in\left[t,t+\delta\right]}\left|\overline{Y}_i^{\mathbf{N}}(u)-\overline{Y}_i^{\mathbf{N}}(v)\right|\,:\,0\leq t\leq t+\delta\leq T\right\}\label{eq:variation1}\\
&=& \sup_{0\leq t\leq t+\delta\leq T}\left\{\sup_{u,v\in\left[t,t+\delta\right]}\left|\overline{M}_i^{\mathbf{N}}(u)-\overline{M}_i^{\mathbf{N}}(v)\right.\right.\label{eq:variation2}\\
&&\left.\left.+\int_{u}^{v}\gamma_{ji}\frac{N_j}{N_i} \overline{Y}_j^{\mathbf{N}}(s-)\left(1-\overline{Y}_i^{\mathbf{N}}(s-)\right)ds-\int_{u}^{v}\mu_i \overline{Y}_i^{\mathbf{N}}(s-)ds\right|\right\}\nonumber\\
&\leq& \sup_{0\leq t\leq t+\delta\leq T}\left\{\sup_{u,v\in\left[t,t+\delta\right]}\left|\overline{M}_i^{\mathbf{N}}(u)-\overline{M}_i^{\mathbf{N}}(v)\right|\right\}+\gamma_{ji}\frac{N_j}{N_i}\frac{\delta}{4}\label{eq:variation4}\\
&:=&\omega_2\left(\overline{Y}_i^{\mathbf{N}},\delta,T\right).\label{eq:variation5}
\end{eqnarray}
Although in equation~\eqref{eq:variation4} there is an explicit~$j$ dependence, we note that, for a bipartite network, choosing island~$i$ fixes the other island~$j$; because of this, the arguments in $\omega_2\left(\overline{Y}_i^{\mathbf{N}},\delta,T\right)$ in equation~\eqref{eq:variation5} do not show explicitly the dependence in~$j$. For any $\epsilon>0$, we have
\begin{equation}
P\left(\omega\left(\overline{Y}_i^{\mathbf{N}},\delta,T\right)\geq \epsilon\right)\leq P\left(\omega_2\left(\overline{Y}_i^{\mathbf{N}},\delta,T\right)\geq\epsilon\right).\nonumber
\end{equation}
Now, from equation~(\ref{eq:Doobsinequality}) and any $\alpha>0$, we can choose $N_1$ and $N_2$ large enough so that
\begin{equation}
P\left(\sup_{0\leq t\leq T}\left|\overline{M}_i^{\mathbf{N}}(t)\right|>\epsilon\right)< \alpha.\nonumber
\end{equation}
and therefore,
\begin{equation}
P\left(\omega\left(\overline{Y}_i^{\mathbf{N}},\delta,T\right)\geq \epsilon\right)\leq P\left(\omega_2\left(\overline{Y}_i^{\mathbf{N}},\delta,T\right)\geq\epsilon\right)<\alpha\nonumber
\end{equation}
for $\delta$ small enough. We conclude that $\left(\mathbf{\overline{Y}}^{\mathbf{N}}(t)\right)$ is a tight family. In other words, it admits a convergent subsequence $\left(\overline{\mathbf{Y}}^{\mathbf{N}_p}(t)\right)\Rightarrow \left(\overline{\mathbf{Y}}(t)\right)$ where $\left(\overline{\mathbf{Y}}(t)\right)$ is almost surely continuous. In fact, the sequence $\left(\overline{\mathbf{Y}}^{\mathbf{N}}(t)\right)$ is not only tight, but it converges to the solution of a deterministic differential equation as shown in the next Theorem~\ref{eq:finalconvergence}. The main argument is that any weak accumulation point in the tight sequence $\left(\overline{\mathbf{Y}}^{\mathbf{N}}(t)\right)$ should obey the equation
\begin{equation}
\frac{d}{dt}\mathbf{\overline{Y}}(t)=\mathbf{F}\left(\mathbf{\overline{Y}}(t)\right)\label{eq:diferenciale}
\end{equation}
and from the uniqueness of equation~(\ref{eq:diferenciale}), the whole sequence converges.
\begin{theorem}\label{eq:finalconvergence}
Let $\left(\mathbf{\overline{Y}}^{\mathbf{N}_p}(t)\right)$ be a subsequence converging weakly to $\left(\mathbf{\overline{Y}}(t)\right)$ (an almost surely continuous process) with $\mathbf{\overline{Y}}^{\mathbf{N}_p}(0)\Rightarrow \mathbf{\overline{Y}}(0)$ and let $\frac{N_i}{N_j}\rightarrow \alpha_{ij}\in\mathbb{R}^{+}$. Then
\begin{equation}
\overline{Y}_i(t)=\overline{Y}_i(0)+\int_{0}^{t}\overline{\gamma}_{ji} \overline{Y}_j(s)\left(1-\overline{Y}_i(s)\right)ds-\int_{0}^{t}\mu_i \overline{Y}_i(s)ds,\label{eq:integr}
\end{equation}
where we defined $\overline{\gamma}_{ji}:=\gamma_{ji}\alpha_{ji}$.
\end{theorem}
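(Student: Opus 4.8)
The plan is to start from the normalized Dynkin decomposition obtained by dividing equation~(\ref{eq:decomposed}) by $N_i$, namely
\begin{equation}
\overline{Y}_i^{\mathbf{N}}(t)=\overline{Y}_i^{\mathbf{N}}(0)+\overline{M}_i^{\mathbf{N}}(t)+\int_0^t \gamma_{ji}\frac{N_j}{N_i}\,\overline{Y}_j^{\mathbf{N}}(s-)\left(1-\overline{Y}_i^{\mathbf{N}}(s-)\right)ds-\int_0^t \mu_i\,\overline{Y}_i^{\mathbf{N}}(s-)\,ds,\nonumber
\end{equation}
and to pass to the limit along the subsequence $\mathbf{N}_p$ term by term, showing that the right-hand side converges to that of~(\ref{eq:integr}), so that the weak limit $\left(\overline{\mathbf{Y}}(t)\right)$ satisfies the stated integral equation. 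Because that limit is almost surely continuous, I would first invoke the Skorokhod representation theorem to place copies of $\left(\overline{\mathbf{Y}}^{\mathbf{N}_p},\overline{\mathbf{M}}^{\mathbf{N}_p}\right)$ and $\left(\overline{\mathbf{Y}},0\right)$ on a common probability space where the convergence is almost sure in the Skorokhod topology; since the limit path is continuous, Skorokhod convergence upgrades to \emph{uniform} convergence on every compact interval $[0,T]$. Joint convergence of the pair to $\left(\overline{\mathbf{Y}},0\right)$ is legitimate because $\left(\overline{\mathbf{M}}^{\mathbf{N}}(t)\right)\Rightarrow 0$ has a constant limit (Theorems~\ref{th:martingalemorre} and~\ref{th:probimplyweak}), and the decomposition above, being an almost-sure path identity, transfers to the copies; thus the martingale term drops out in the limit.

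The boundary term converges by hypothesis, $\overline{Y}_i^{\mathbf{N}_p}(0)\Rightarrow\overline{Y}_i(0)$, and the martingale vanishes, so the substance is the convergence of the two drift integrals. For the healing integral I would use uniform convergence of $s\mapsto\overline{Y}_i^{\mathbf{N}_p}(s)$ on $[0,t]$, boundedness of the integrand (the set $[0,1]^2$ is invariant, as recalled after equation~(\ref{eq:stochastic3})), and the fact that $\overline{Y}_i^{\mathbf{N}_p}(s-)$ and $\overline{Y}_i^{\mathbf{N}_p}(s)$ differ only on the at-most-countable, hence Lebesgue-null, jump set; bounded convergence then gives $\int_0^t\mu_i\overline{Y}_i^{\mathbf{N}_p}(s-)ds\to\int_0^t\mu_i\overline{Y}_i(s)ds$. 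For the nonlinear inter-transmission integral I would split the error into a coefficient part and an integrand part: with $\Phi^{\mathbf{N}}(s)=\overline{Y}_j^{\mathbf{N}}(s-)\left(1-\overline{Y}_i^{\mathbf{N}}(s-)\right)$ and $\Phi(s)=\overline{Y}_j(s)\left(1-\overline{Y}_i(s)\right)$, control
\begin{equation}
\left|\int_0^t\!\gamma_{ji}\tfrac{N_j}{N_i}\Phi^{\mathbf{N}}(s)\,ds-\int_0^t\!\overline{\gamma}_{ji}\Phi(s)\,ds\right|\le \left|\gamma_{ji}\tfrac{N_j}{N_i}-\overline{\gamma}_{ji}\right|\!\int_0^t\!|\Phi^{\mathbf{N}}(s)|\,ds+\overline{\gamma}_{ji}\!\int_0^t\!\left|\Phi^{\mathbf{N}}(s)-\Phi(s)\right|ds.\nonumber
\end{equation}
The first summand tends to $0$ since $N_j/N_i\to\alpha_{ji}$ (hence $\gamma_{ji}N_j/N_i\to\overline{\gamma}_{ji}$) while $|\Phi^{\mathbf{N}}|\le 1$ on $[0,1]^2$ keeps the integral bounded; the second tends to $0$ because $(a,b)\mapsto b(1-a)$ is Lipschitz and bounded on $[0,1]^2$, so uniform convergence of $\overline{Y}_i^{\mathbf{N}_p},\overline{Y}_j^{\mathbf{N}_p}$ forces $\Phi^{\mathbf{N}_p}\to\Phi$ uniformly, again up to the null jump set. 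Collecting the terms yields exactly~(\ref{eq:integr}) on the representation space, which identifies the weak limit since the equation is law-determining.

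The main obstacle is precisely this passage of the weak limit through the \emph{nonlinear, $\mathbf{N}$-dependent} drift: one must simultaneously absorb the drifting coefficient $\gamma_{ji}N_j/N_i$ and replace $\overline{\mathbf{Y}}^{\mathbf{N}_p}(s-)$ by its continuous limit inside a product, which is why the invariance of $[0,1]^2$ (uniform boundedness) together with the Lipschitz and boundedness of the drift on this set are the load-bearing facts; a bare continuous-mapping argument does not apply directly because the functional itself varies with $\mathbf{N}$, which forces either the Skorokhod-representation route above or an extended continuous-mapping theorem. Finally, since the limiting vector field with components $\overline{\gamma}_{ji}\overline{Y}_j(1-\overline{Y}_i)-\mu_i\overline{Y}_i$ is globally Lipschitz on the compact invariant set $[0,1]^2$, equation~(\ref{eq:diferenciale}), equivalently~(\ref{eq:integr}), has a unique solution for the given initial condition; hence every weakly convergent subsequence of the tight family shares the same deterministic limit, and therefore the whole sequence $\left(\overline{\mathbf{Y}}^{\mathbf{N}}(t)\right)$ converges weakly to it.
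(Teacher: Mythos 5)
Your proposal is correct and follows essentially the same route as the paper: pass to the limit term by term in the normalized Dynkin decomposition, using that the martingale vanishes weakly and that weak convergence to an almost surely continuous limit upgrades (via Skorokhod representation) to almost sure uniform convergence on compacts, which carries the $\mathbf{N}$-dependent drift integrals to their limits since $\gamma_{ji}N_j/N_i\rightarrow\overline{\gamma}_{ji}$ and the integrands are bounded on the invariant set. You simply make explicit several steps the paper leaves implicit (the representation-theorem step, the coefficient/integrand error split, and the null jump set), and you fold in the uniqueness argument that the paper defers to Theorem~\ref{th:convbipartite}.
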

\begin{proof}
\hspace{0.8cm}
We have the following term by term convergence
\begin{equation}
\begin{array}{cccccccccc}
\overline{Y}_i^{\mathbf{N}_p}(t)&=&\overline{Y}_i^{\mathbf{N}_p}(0)&+& \overline{M}^{\mathbf{N}_p}_i(t)&+&\int_{0}^{t}\gamma_{ji}\frac{N_j}{N_i}\overline{Y}_j^{\mathbf{N}_p}(s)\left(1-\overline{Y}_i^{\mathbf{N}_p}(s)\right)ds&-&\int_{0}^{t}\mu_i\overline{Y}_i^{\mathbf{N}_p}(s)ds\\
\downarrow 1 & & \downarrow 2 & & \downarrow 3 & & \downarrow 4 & & \downarrow 5\\
\overline{Y}_i(t)&=&\overline{Y}_i(0)&+& 0 &+& \int_{0}^{t}\overline{\gamma}_{ji}\overline{Y}_j(s)\left(1-\overline{Y}_i(s)\right)ds&-&\int_{0}^{t}\mu_i\overline{Y}_i(s)ds
\end{array}.\nonumber
\end{equation}
Convergences $1$ and $2$ hold since we assumed that $\left(\overline{Y}_i^{\mathbf{N}_p}(t)\right)\Rightarrow\left(\overline{Y}_i(t)\right)$ and $\overline{Y}_i^{\mathbf{N}_p}(0)\Rightarrow \overline{Y}_i(0)$. Convergence $3$ holds since $\mathbf{\overline{M}}^{\mathbf{N}}(t)\Rightarrow 0$ as proved before. Moreover, since $\left(\mathbf{\overline{Y}}(t)\right)$ is almost surely continuous, then $\left(\overline{Y}_i^{\mathbf{N}_p}(t)\right)\rightarrow\left(\overline{Y}_i(t)\right)$ almost surely uniformly over compact intervals. Therefore, given $T>0$,
\begin{equation}
\int_{0}^{t}\gamma_{ji}\frac{N_j}{N_i}\overline{Y}_j^{\mathbf{N}_p}(s)\left(1-\overline{Y}_i^{\mathbf{N}_p}(s)\right)ds\rightarrow
\int_{0}^{t}\overline{\gamma}_{ji}\overline{Y}_j(s)\left(1-\overline{Y}_i(s)\right)ds,\nonumber
\end{equation}
almost surely uniformly over the interval $\left[0,T\right]$ and convergence $4$ follows. The $5$th case results similarly. Therefore, for any convergent subsequence with $\left(\mathbf{\overline{Y}}^{\mathbf{N}_p}(t)\right)\Rightarrow\left(\mathbf{\overline{Y}}(t)\right)$ and $\mathbf{\overline{Y}}^{\mathbf{N}_p}(0)\Rightarrow\mathbf{\overline{Y}}(0)$, it follows that $\left(\mathbf{\overline{Y}}(t)\right)$ is solution of the integral equation~(\ref{eq:integr}).
\end{proof}

Finally, the next theorem rigorously states the emergent dynamics of the single virus spread over bipartite networks as the number of agents grows large.

\begin{theorem}\label{th:convbipartite}
Let $\left(\overline{Y}^{\mathbf{N}}_1(0), \overline{Y}^{\mathbf{N}}_2(0)\right)\Rightarrow \mathbf{y}_0\in \mathbb{R}^2$. Then, the normalized sequence $\left(\overline{\mathbf{Y}}^{\mathbf{N}}(t)\right)$ converges weakly to the solution $\left(y_1\left(t,\mathbf{y}_0\right),y_1\left(t,\mathbf{y}_0\right)\right)$ of the following ODE:
\begin{eqnarray}
\frac{d}{dt}y_1(t) & = & \left(\overline{\gamma}_{21}y_2(t)\right)\left(1-y_1(t)\right)-\mu_1 y_1(t)\label{eq:biv1}\\
\frac{d}{dt}y_2(t) & = & \left(\overline{\gamma}_{12}y_1(t)\right)\left(1-y_2(t)\right)-\mu_2 y_2(t)\label{eq:biv2}
\end{eqnarray}
\end{theorem}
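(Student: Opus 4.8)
The plan is to assemble the machinery already developed---tightness, identification of the subsequential limits, and uniqueness of the limiting ODE---into a single convergence statement. First I would invoke the tightness argument established just above via the stochastic Arzel\`{a}--Ascoli Theorem~\ref{th:arzela}: the uniform boundedness condition~\eqref{eq:arzela} holds trivially because $\overline{Y}_i^{\mathbf{N}}(t)\in[0,1]$ almost surely, and the equicontinuity condition~\eqref{eq:arzela2} follows from the modulus-of-continuity bound~\eqref{eq:variation4} together with the weak convergence $\left(\overline{\mathbf{M}}^{\mathbf{N}}(t)\right)\Rightarrow 0$ (Theorems~\ref{th:martingalemorre} and~\ref{th:probimplyweak}) and Doob's inequality~\eqref{eq:Doobsinequality}. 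Consequently the sequence $\left(\overline{\mathbf{Y}}^{\mathbf{N}}(t)\right)$ is relatively compact in $D_{\mathbb{R}}[0,T]$ and every weak limit point is concentrated on the continuous paths $C_{\mathbb{R}}$.

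Next I would identify every subsequential limit. By relative compactness there is at least one convergent subsequence $\left(\overline{\mathbf{Y}}^{\mathbf{N}_p}(t)\right)\Rightarrow\left(\overline{\mathbf{Y}}(t)\right)$; since $\overline{\mathbf{Y}}^{\mathbf{N}_p}(0)\Rightarrow\mathbf{y}_0$ by hypothesis, Theorem~\ref{eq:finalconvergence} applies and shows that this limit satisfies the integral equation~\eqref{eq:integr} with initial value $\mathbf{y}_0$. Because the limit is almost surely continuous, the integrands in~\eqref{eq:integr} are continuous, so the Fundamental Theorem of Calculus lets me differentiate both sides and conclude that $\left(\overline{Y}_1,\overline{Y}_2\right)$ solves the ODE system~\eqref{eq:biv1}--\eqref{eq:biv2} with initial condition $\mathbf{y}_0$.

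Then I would argue uniqueness. The vector field $\mathbf{F}(y_1,y_2)=\left(\overline{\gamma}_{21}y_2(1-y_1)-\mu_1 y_1,\;\overline{\gamma}_{12}y_1(1-y_2)-\mu_2 y_2\right)$ is polynomial, hence smooth, and on the compact invariant set $[0,1]^2$---which contains all trajectories, as noted for the invariance of the normalized dynamics---it is globally Lipschitz. By the Picard--Lindel\"{o}f Theorem the initial value problem $\dot{\mathbf{y}}=\mathbf{F}(\mathbf{y})$, $\mathbf{y}(0)=\mathbf{y}_0$ admits a unique solution, which is therefore deterministic. Hence every subsequential weak limit of $\left(\overline{\mathbf{Y}}^{\mathbf{N}}(t)\right)$ coincides with the same deterministic path $\left(y_1(t,\mathbf{y}_0),y_2(t,\mathbf{y}_0)\right)$.

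Finally I would close the argument with the standard fact that a relatively compact sequence all of whose accumulation points equal a single deterministic limit converges to that limit; since the tight sequence has a unique weak accumulation point, the whole sequence $\left(\overline{\mathbf{Y}}^{\mathbf{N}}(t)\right)$ converges weakly to it. The step I expect to be the main obstacle is the identification/uniqueness interface---verifying that the subsequential limit genuinely solves~\eqref{eq:biv1}--\eqref{eq:biv2}, which requires passing the weak limit through the nonlinear product $\overline{Y}_j^{\mathbf{N}_p}(1-\overline{Y}_i^{\mathbf{N}_p})$ in the drift (handled by Theorem~\ref{eq:finalconvergence} through almost sure uniform convergence on compacts), and confirming that trajectories never leave the invariant square $[0,1]^2$ so that the global Lipschitz constant, and hence Picard--Lindel\"{o}f, legitimately applies.
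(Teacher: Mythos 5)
Your proposal is correct and follows essentially the same route as the paper's own proof: tightness of $\left(\overline{\mathbf{Y}}^{\mathbf{N}}(t)\right)$ plus Theorem~\ref{eq:finalconvergence} to show every subsequential weak limit solves the integral equation~(\ref{eq:integr}), then the Lipschitz property of the vector field on the invariant set $[0,1]^2$ to get uniqueness and hence convergence of the whole sequence. The only difference is that you spell out the supporting steps (Arzel\`{a}--Ascoli conditions, Picard--Lindel\"{o}f, and the subsequence principle) in more detail than the paper, which simply cites the earlier results.
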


\begin{proof}
As the underlying vector field of~(\ref{eq:integr}), $\mathbf{F}=\left(F_1,F_2\right)\,:\,\left[0,1\right]^2\rightarrow \mathbb{R}^2$, where $F_i\left(y_1,y_2\right):=\overline{\gamma}_{ji}y_j\left(1-y_i\right)-\mu_i y_i$, is Lipschitz, the continuous (and thus, differentiable) solution $\left(\mathbf{\overline{Y}}(t)\right)$ of (\ref{eq:integr}) is unique. Thus, any weak limit of $\left(\mathbf{\overline{Y}}^{\mathbf{N}}(t)\right)$ with initial condition given by $\left(\mathbf{\overline{Y}}^{\mathbf{N}}(0)\right)$ and converging weakly to $\mathbf{\overline{Y}}(0)$ is equal to the unique solution $\left(\mathbf{\overline{Y}}(t)\right)$ of~(\ref{eq:integr}) with initial condition $\left(\mathbf{\overline{Y}}(0)\right)$. Therefore, the whole sequence converges $\left(\mathbf{\overline{Y}}^{\mathbf{N}}(t)\right)\Rightarrow\left(\mathbf{\overline{Y}}(t)\right)$ to the solution of~(\ref{eq:integr}). Equation~(\ref{eq:integr}) is the integral version the ODE~(\ref{eq:biv1})-(\ref{eq:biv2}). Theorem~(\ref{th:convbipartite}) is concluded.
\end{proof}

We showed in this Section that the sequence $\left(\mathbf{\overline{Y}^{N}}(t)\right)$ over the corresponding sequence of bipartite networks $G^{\left(N_1,N_2\right)}$ converges weakly to the solution of a vector differential equation. We explored the martingale structure of the perturbing noise on the dynamics of the process $\left(\mathbf{\overline{Y}^{N}}(t)\right)$ to show that it converges weakly under the Skorokhod topology to zero. As a Corollary to this fact, the family $\left(\mathbf{\overline{Y}^{N}}(t)\right)$ is tight with a single accumulation point given by the (unique) solution of a limiting differential equation. Since any convergent subsequence converges to the same accumulation point, then the whole sequence $\left(\mathbf{\overline{Y}^{N}}(t)\right)$ converges to the unique accumulation point, namely, the solution of the ODE~(\ref{eq:biv1})-(\ref{eq:biv2}). In the next section, we extend the convergence result to the multi-virus multipartite network case.

\section{Mean Field -- Multivirus over Multipartite Networks}\label{sec:extension}

In Section~\ref{sec:meanfield}, we established that the sequence of single virus macroprocesses $\left(\overline{Y}_1^{\mathbf{N}}(t),\overline{Y}_2^{\mathbf{N}}(t)\right)$ over the corresponding sequence of bipartite networks $G^{\mathbf{N}}$ converges weakly to the solution $\left(y_1(t),y_2(t)\right)$ of a deterministic ODE given by equations~(\ref{eq:biv1})-(\ref{eq:biv2}). We divided the proof into four main steps:
\begin{enumerate}[(i)]
 \item The martingale $\left(\overline{M}_1^{\mathbf{N}}(t),\overline{M}_2^{\mathbf{N}}(t)\right)$ converges weakly to zero;
 \item The family $\left(\overline{Y}_1^{\mathbf{N}}(t),\overline{Y}_2^{\mathbf{N}}(t)\right)$, indexed by $\mathbf{N}$, is tight;
 \item Any accumulation point of the tight family is solution of~(\ref{eq:biv1})-(\ref{eq:biv2});
 \item Uniqueness of the differential equation~(\ref{eq:biv1})-(\ref{eq:biv2}) implies convergence.
 \end{enumerate}
 In this Section, we extend Theorem~\ref{th:convbipartite} to the more general case of multi-virus epidemics over multipartite networks. We consider in Subsection~\ref{subsec:singlevirus} single-virus over multipartite networks and then the general case of multivirus over multipartite networks is in Subsection~\ref{subsec:multivirus}.

\subsection{Single-virus over Multipartite Networks}
\label{subsec:singlevirus}

For single virus spread, remark that $\left(\mathbf{\overline{Y}}^{\mathbf{N}}(t)\right)=\left(\overline{Y}_{1}^{\mathbf{N}}(t),\ldots,\overline{Y}_{M}^{\mathbf{N}}(t)\right)$ stands for the process associated with the fraction of infected nodes at each island $i \in \left\{1,\ldots,M\right\}$ over the multipartite network $G^{\mathbf{N}}$ with $M$ islands and supertopology induced by the topology of the graph $G$. In Section~\ref{sec:meanfield}, we obtain the following pathwise description for $\left(\mathbf{\overline{Y}}^{\mathbf{N}}(t)\right)$
\begin{equation}
Y_{i}^{\mathbf{N}}(t) =  Y_{i}^{\mathbf{N}}(0)+M_{i}^{\mathbf{N}}(t)+\sum_{j\sim i}\int_{0}^{t}\gamma_{ji} Y_{j}^{\mathbf{N}}(s-)\left(\frac{N_i-Y_{i}^{\mathbf{N}}(s-)}{N_i}\right)ds-\int_{0}^{t}\mu_i Y_{i}^{\mathbf{N}}(s-)ds,\nonumber
\end{equation}
for $i=1,\ldots,M$. The infections from all neighboring islands are now coupled by these $M$ equations. The corresponding martingales are given by
\begin{eqnarray}
&&M_{i}^{\mathbf{N}}(t)  =
\nonumber
\\
&& \sum_{j\sim i}\sum_{\ell=1}^{N_i}\sum_{q = 1}^{N_j}
\underbrace{\int_{0}^{t}{\bf 1}_{\left\{Y_{j}^{\mathbf{N}}(s-)=q, Y_{i}^{\mathbf{N}}(s-)=\ell\right\}}
\times\left(\mathcal{N}_{\gamma_{ji} q\left(\frac{N_i-\ell}{N_i}\right)}(ds)-\gamma_{ji} q\left(\frac{N_i-\ell}{N_i}\right)ds\right)}_{:=M^{\mathbf{N}}_{aij}(t,\ell,q)}\nonumber\\
&&-\underbrace{\sum_{\ell= 1}^{N_i} \int_{0}^{t} {\bf 1}_{\left\{Y_{ik}^{\mathbf{N}}(s-)=\ell\right\}}\left(\mathcal{N}_{\ell\mu_i}(ds)-\ell\mu_i ds\right)}_{:=M_b^{\mathbf{N}}(t)}\nonumber
\end{eqnarray}
Next, we prove that the sequence $\left(\mathbf{\overline{Y}}^{\mathbf{N}}(t)\right)$ over the underlying sequence of multipartite networks $G^{\mathbf{N}}$ converges weakly to the solution of an ODE.
\begin{theorem}\label{th:convmulti}
Let $\left(\mathbf{\overline{Y}}^{\mathbf{N}}(0)\right)\Rightarrow \mathbf{y}_0\in \mathbb{R}^M$ with $\mathbf{N}=\left(N_1,\ldots,N_M\right)\rightarrow \infty$ and $\frac{N_i}{N_j}\rightarrow \alpha_{ij}<\infty$ for all $i\sim j$. Then, the normalized sequence $\left(\overline{\mathbf{Y}}^{\mathbf{N}}(t)\right)$ converges weakly to the solution $\left(\mathbf{y}(t,\mathbf{y}_0)\right)=\left(y_1\left(t,\mathbf{y}_0\right),\ldots, y_M\left(t,\mathbf{y}_0\right)\right)$ of the following ODE:
\begin{equation}
\frac{d}{dt}y_i(t) =  \underbrace{\left(\sum_{j\sim i}\overline{\gamma}_{ji}y_j(t)\right)\left(1-y_i(t)\right)-\mu_i y_i(t)}_{:=F_i\left(y_1(t),\ldots,y_M(t)\right)}\label{eq:mult1}
\end{equation}
with $\overline{\gamma}_{ji}:=\alpha_{ji}\gamma_{ji}$.
\end{theorem}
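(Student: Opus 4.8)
The plan is to reproduce, essentially verbatim, the four-step scheme that established the bipartite result (Theorem~\ref{th:convbipartite}): (i) the normalized martingale vanishes weakly, (ii) the family $\left(\overline{\mathbf{Y}}^{\mathbf{N}}(t)\right)$ is tight, (iii) every accumulation point solves the limiting ODE, and (iv) uniqueness upgrades subsequential convergence to convergence of the whole sequence. The only structural novelty is that each island $i$ now aggregates infections over its \emph{finite} superneighborhood $\mathcal{N}(i)$ rather than from a single partner. First I would record the process decomposition of Theorem~\ref{th:dynkins}: the pathwise representation~(\ref{eq:stochastic33}) already splits $Y_i^{\mathbf{N}}$ into $Y_i^{\mathbf{N}}(0)$, the martingale $M_i^{\mathbf{N}}$ displayed above (a sum over $j\sim i$ of compensated inter-transmission terms $M_{aij}^{\mathbf{N}}$ minus the compensated healing martingale $M_b^{\mathbf{N}}$), and the drift $\sum_{j\sim i}\int_0^t \gamma_{ji}Y_j^{\mathbf{N}}(s-)\frac{N_i-Y_i^{\mathbf{N}}(s-)}{N_i}\,ds-\int_0^t\mu_i Y_i^{\mathbf{N}}(s-)\,ds$.

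The first substantive step is to show $\overline{M}_i^{\mathbf{N}}=M_i^{\mathbf{N}}/N_i$ vanishes in $\mathcal{L}_2$, mirroring Theorem~\ref{th:martingalemorre}. The crucial point is that all Poisson clocks entering $M_i^{\mathbf{N}}$—indexed by the neighbor $j$, by the count $\ell$ at island $i$, and by the count $q$ at island $j$—are mutually independent, so by the orthogonality of integrated compensated Poisson martingales (Theorem~\ref{th:ortho}) every cross term in ${\sf E}\left(M_i^{\mathbf{N}}(t)\right)^2$ vanishes and the second moment collapses to a \emph{finite} sum over $j\in\mathcal{N}(i)$ of bipartite-type contributions. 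After the It\^{o} isometry each such term is $O(N_j)$, whence ${\sf E}\left(\overline{M}_i^{\mathbf{N}}(t)\right)^2\leq \frac{1}{N_i}\sum_{j\sim i}\frac{\gamma_{ji}}{4}\frac{N_j}{N_i}t\to 0$, because the superdegree $d_i=|\mathcal{N}(i)|$ is fixed and each ratio $N_j/N_i\to\alpha_{ji}<\infty$. Doob's inequality, exactly as in~(\ref{eq:Doobsinequality}), then gives convergence in probability in the sup norm, and Theorem~\ref{th:probimplyweak} yields $\overline{\mathbf{M}}^{\mathbf{N}}\Rightarrow 0$.

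With the martingale controlled, tightness follows from the stochastic Arzel\`{a}--Ascoli criterion (Theorem~\ref{th:arzela}): uniform boundedness~(\ref{eq:arzela2}) is immediate since $\overline{Y}_i^{\mathbf{N}}(t)\in[0,1]$, and equicontinuity is obtained by dominating the modulus of continuity by the martingale oscillation plus a drift contribution $\sum_{j\sim i}\gamma_{ji}\frac{N_j}{N_i}\frac{\delta}{4}+\mu_i\delta$, which is $O(\delta)$ uniformly in $\mathbf{N}$ precisely because the neighbor sum is finite and the ratios converge. Hence every weak accumulation point is a.s.\ continuous, and the term-by-term passage to the limit of Theorem~\ref{eq:finalconvergence}—using $\overline{\mathbf{M}}^{\mathbf{N}}\Rightarrow 0$, a.s.\ uniform convergence on compacts, and $\gamma_{ji}N_j/N_i\to\overline{\gamma}_{ji}$—shows that any such limit solves the integral form of~(\ref{eq:mult1}). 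Finally, the vector field $F_i(y_1,\ldots,y_M)=\left(\sum_{j\sim i}\overline{\gamma}_{ji}y_j\right)(1-y_i)-\mu_i y_i$ is polynomial, hence globally Lipschitz on the invariant compact box $[0,1]^M$, so the ODE admits a unique solution and uniqueness forces the entire sequence to converge.

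I expect the only real point of care—rather than a genuine obstacle—to be the orthogonality bookkeeping across the superneighborhood: one must verify that widening the index set from a single partner to all $j\sim i$ preserves mutual independence of the driving Poisson clocks, so that the $\mathcal{L}_2$ norm of $M_i^{\mathbf{N}}$ remains a sum of individual contributions with no surviving cross terms. Because $M$, and therefore every superdegree $d_i$, is held fixed while $\mathbf{N}\to\infty$, all the finite-sum estimates of the bipartite argument carry over unchanged, and the coupling among islands enters only through the finitely many drift equations~(\ref{eq:mult1}).
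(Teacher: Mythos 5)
Your proposal is correct and follows essentially the same four-step route as the paper's own proof: the same orthogonality bookkeeping over the finite superneighborhood to get the $O(N_j/N_i^2)$ variance bound, Doob plus Theorem~\ref{th:probimplyweak} for weak vanishing of the martingale, the stochastic Arzel\`{a}--Ascoli criterion for tightness, term-by-term passage to the integral equation, and global Lipschitzness of $\mathbf{F}$ on the invariant box $\left[0,1\right]^M$ for uniqueness. The only differences are cosmetic (the paper bounds the variance by $M\max_{j}\left\{\gamma_{ji}N_j\right\}t$ rather than your neighbor-by-neighbor sum, and you cite equation~(\ref{eq:arzela2}) where the uniform-boundedness condition is~(\ref{eq:arzela})).
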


\begin{proof}
For the sake of clarity, we revisit each of the points (i)-(iv) referred to in the beginning of this Section.

\textbf{$\left(\mbox{i}\right)$ Martingale vanishes.} We start by observing that, for fixed $i,j\in \left\{1,\ldots,M\right\}$, the underlying Poisson point processes $\mathcal{N}_{\gamma_{ji} q\left(\frac{N_i-\ell}{N_i}\right)}$ indexed by $\left(\ell,q\right)\in\left\{1,\ldots,N_i\right\}\times\left\{1,\ldots,N_j\right\}$ are independent as explained in Section~\ref{sec:pointprocesses}. Now, fixing only $i$, the Poisson processes are still independent due to the independence of the exponential time services associated with different nodes. Therefore, it follows as a Corollary to Theorem~\ref{th:ortho} that, for fixed $i$, the martingales $M^{\mathbf{N}}_{aij}(t,\ell,q)$, indexed by $j$, $\ell,q$ are orthogonal in the same sense as in equations~(\ref{eq:ortogo1})-(\ref{eq:ortogo2}). Let $M^{\mathbf{N}}_{ai}(t):=\sum_{j\sim i}\sum_{\ell=1}^{N_i}\sum_{q = 1}^{N_j}M^{\mathbf{N}}_{aij}(t,\ell,q)$. It turns out that
\begin{eqnarray*}
{\sf E}\left(M_{ai}^{\mathbf{N}}(t)\right)^2&\!\!\!=\!\!\!& {\sf E}\left( \sum_{j\sim i}\sum_{\ell= 1}^{N_i}\sum_{q= 1}^{N_j} M^{\mathbf{N}}_{aij}(t,\ell,q)\right)^2
=\sum_{j\sim i}\sum_{\ell= 1}^{N_i}\sum_{q= 1}^{N_j}{\sf E}\left(M^{\mathbf{N}}_{aij}(t,\ell,q)\right)^2\nonumber\\
&\!\!\!=\!\!\!&\sum_{j\sim i}\sum_{\ell= 1}^{N_i}\sum_{q= 1}^{N_j}{\sf E}\left(\int_{0}^{t}{\bf 1}_{\left\{Y_{j}^{\mathbf{N}}(s-)=q, Y_{i}^{\mathbf{N}}(s-)=\ell\right\}}\gamma_{ji} q\left(\frac{N_i-\ell}{N_i}\right)ds\right)\nonumber\\
&\!\!\!\leq\!\!\!&\sum_{j\sim i}{\sf E}\left(\int_{0}^{t}\sum_{\ell= 1}^{N_i}\sum_{q= 1}^{N_j}{\bf 1}_{\left\{Y_{j}^{\mathbf{N}}(s-)=q, Y_{i}^{\mathbf{N}}(s-)=\ell\right\}}\gamma_{ji} N_jds\right)\nonumber\\
&\!\!\!\leq\!\!\!&\sum_{j\sim i}\gamma_{ji}N_jt \leq M \max_{j=1,\ldots,M}\left\{\gamma_{ji}N_j\right\}t,\nonumber
\end{eqnarray*}
and thus,
\begin{equation}
{\sf E}\left(\overline{M}_{ai}^{\mathbf{N}}(t)\right)^2=\frac{1}{N_i^2}{\sf E}\left(M_{ai}^{\mathbf{N}}(t)\right)^2\leq \frac{M}{N_i}\left(\max_{j=1,\ldots,M}\left\{\frac{\gamma_{ji}N_j}{N_i}\right\}\right)t\rightarrow 0.\nonumber
\end{equation}
Similarly ${\sf E}\left(\overline{M}_{bi}^{\mathbf{N}}(t)\right)^2\rightarrow 0$ and therefore, the normalized martingale $\overline{M}_{i}^{\mathbf{N}}(t)$ converges to zero in $\mathcal{L}_2$ for all time $t$, $t\geq 0$. Now, from Doob's inequality and from Theorem~\ref{th:probimplyweak}, we conclude that $\left(\overline{M}_{i}^{\mathbf{N}}(t)\right)\Rightarrow 0$.

\textbf{$\left(\mbox{ii}\right)$ The family $\left(\mathbf{\overline{Y}}^{\mathbf{N}}(t)\right)$ is tight.}
As in equations~(\ref{eq:variation1})-(\ref{eq:variation5}), we have
\begin{eqnarray}
\label{eq:multivariation}
\omega\left(\overline{Y}_i^{\mathbf{N}}, \delta, T\right)&\leq& \sup_{0\leq t\leq t+\delta\leq T}\left\{\sup_{u,v\in\left[t,t+\delta\right]}\left|\overline{M}_i^{\mathbf{N}}(u)-\overline{M}_i^{\mathbf{N}}(v)\right|\right\}
+\sum_{j\sim i}\gamma_{ji}M\frac{N_j}{N_i}\delta\\
&:=&\omega_2\left(\overline{Y}_i^{\mathbf{N}},\delta,T\right).
\end{eqnarray}
From Theorem~\ref{th:arzela} and similar arguments as in Section~\ref{sec:meanfield}, we conclude that $\left(\mathbf{\overline{Y}}^{\mathbf{N}}(t)\right)$ is a tight family, that is, it admits a convergent subsequence $\left(\mathbf{\overline{Y}}^{\mathbf{N_k}}(t)\right)\Rightarrow \left(\mathbf{\overline{Y}}(t)\right)$. Also, from Theorem~\ref{th:arzela}, $\left(\mathbf{\overline{Y}}(t)\right)$ is almost surely continuous.

\textbf{$\left(\mbox{iii}\right)$ If $\left(\mathbf{\overline{Y}}^{\mathbf{N}_k}(t)\right)\Rightarrow \left(\mathbf{\overline{Y}}(t)\right)$ then, $\left(\mathbf{\overline{Y}}(t)\right)$ is solution of the ODE~(\ref{eq:mult1}).} It follows similarly to as done in the proof of Theorem~(\ref{eq:finalconvergence}), remarking that we assume a finite (fixed) number of islands $M$.

\textbf{$\left(\mbox{iv}\right)$ $\left(\mathbf{\overline{Y}}^{\mathbf{N}}(t)\right)\Rightarrow \left(\mathbf{\overline{Y}}(t)\right)$, where $\left(\mathbf{\overline{Y}}(t)\right)$ is solution of the ODE~(\ref{eq:mult1}).} Note that the underlying vector field
\begin{equation}
\mathbf{F}\left(y_1,\ldots,y_M\right)=\left(F_1\left(y_1,\ldots,y_M\right),\ldots,F_M\left(y_1,\ldots,y_M\right)\right)
\end{equation}
in equation~(\ref{eq:mult1}) is differentiable and thus, $\mathbf{F}$ is locally Lipschitz. Therefore, solution of~(\ref{eq:mult1}) exists locally and it is unique. Since the state space of interest $\left[0,1\right]^{M}$ is compact and invariant, $\mathbf{F}$ is globally Lipschitz over $\left[0,1\right]^{M}$, and any solution of~(\ref{eq:mult1}) is defined for all time $t\geq 0$ and is unique. In particular, any convergent subsequence converges to the same weak limit given by the unique solution of~(\ref{eq:mult1}) and, thus, the whole sequence converges. This concludes the proof of Theorem~\ref{th:convmulti}.
\end{proof}

\subsection{Multivirus over Multipartite Networks}
\label{subsec:multivirus}

We denote as $\left(\mathbf{\overline{Y}}^{\mathbf{N}}(t)\right)=\left[\overline{Y}_{ik}^{\mathbf{N}}(t)\right]_{ik}$ the matrix process collecting the fraction of $k$-infected nodes at island $i \in \left\{1,\ldots,M\right\}$ with $k \in \left\{1,\ldots,K\right\}$ over time $t$, $t\geq 0$, where $K$ is the number of virus strains. In this Subsection, we refer to $\left(\mathbf{\overline{Y}}_{i}^{\mathbf{N}}(t)\right)=\left(\overline{Y}_{i1}^{\mathbf{N}}(t),\ldots,\overline{Y}_{iK}^{\mathbf{N}}(t)\right)$ as the distribution of infected nodes at island $i$ across the $K$ strains of virus. Recall the definition of the state-space~$\mathcal{Y}^{\mathbf{N}}_i$ of the process $\left(\mathbf{\overline{Y}}_{i}^{\mathbf{N}}(t)\right)$ given in Section~\ref{sec:notation}. Applying the same reasoning as in Section~\ref{sec:pointprocesses}, we obtain the following pathwise description for $\left(\mathbf{\overline{Y}}^{\mathbf{N}}(t)\right)$

\begin{equation}
Y_{ik}^{\mathbf{N}}(t) =  Y_{ik}^{\mathbf{N}}(0)+M_{ik}^{\mathbf{N}}(t)+\sum_{j\sim i}\int_{0}^{t}\gamma^{k}_{ji} Y_{jk}^{\mathbf{N}}(s-)\left(\frac{N_i- \langle \mathbf{Y}_{i}^{\mathbf{N}}(s-),\mathbf{1}\rangle}{N_i}\right)ds-\int_{0}^{t}\mu_i^k Y_{ik}^{\mathbf{N}}(s-)ds,\nonumber
\end{equation}
where the martingale is given by
\begin{eqnarray*}
M_{ik}^{\mathbf{N}}(t)&=& \sum_{j\sim i}\sum_{y\in\mathcal{Y}^{\mathbf{N}}_i}\sum_{q = 1}^{N_j} \underbrace{\int_{0}^{t} {\bf 1}_{\left\{Y_{jk}^{\mathbf{N}}(s-)=q, \mathbf{Y}_{i}^{\mathbf{N}}(s-)=y\right\}}\left(\mathcal{N}_{\gamma^k_{ji} q\left(\frac{N_i-\langle y,\mathbf{1}\rangle}{N_i}\right)}(ds)-\gamma_{ji} q\left(\frac{N_i-\langle y,1\rangle}{N_i}\right)ds\right)}_{M^{\mathbf{N}}_{j,y,q}(t)}\\
&&-\underbrace{\sum_{\ell= 1}^{N_i} \int_{0}^{t} {\bf 1}_{\left\{Y_{ik}^{\mathbf{N}}(s-)=\ell\right\}}\left(\mathcal{N}_{\ell\mu_i}(ds)-\ell\mu_i ds\right)}_{M_b^{\mathbf{N}}(t)}
\end{eqnarray*}
and from the construction in Section~\ref{sec:pointprocesses}, $\left\{\mathcal{N}_{\gamma_{ji} q\left(\frac{N_i-\langle y,\mathbf{1}\rangle}{N_i}\right)}\right\}_{i,j,q,y}$ is a family of independent Poisson processes.

\begin{theorem}\label{th:convmulti-b}
Let $\left(\mathbf{\overline{Y}}^{\mathbf{N}}(0)\right)\Rightarrow \mathbf{y}_0\in \mathbb{R}^{M\times K}$ with $\mathbf{N}=\left(N_1,\ldots,N_M\right)\rightarrow \infty$ and $\frac{N_i}{N_j}\rightarrow \alpha_{ij}<\infty$ for all $i\sim j$. Then, the normalized sequence $\left(\overline{\mathbf{Y}}^{\mathbf{N}}(t)\right)$ converges weakly to the solution $\left(\mathbf{y}(t,\mathbf{y}_0)\right)$ of the following ODE:
\begin{equation}
\frac{d}{dt}y_{ik}(t) =  \underbrace{\left(\sum_{j\sim i}\overline{\gamma}^k_{ji}y_{jk}(t)\right)\left(1-\sum_{m=1}^{K}y_{im}(t)\right)-\mu^k_i y_{ik}(t)}_{F_{ik}\left(\left[y_{mn}(t)\right]_{mn}\right)}\label{eq:multiva33}
\end{equation}
with $\overline{\gamma}_{ji}^{k}:=\alpha_{ji}\gamma_{ji}^{k}$.
\end{theorem}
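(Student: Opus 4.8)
The plan is to reproduce, for the matrix-valued process $\left(\mathbf{\overline{Y}}^{\mathbf{N}}(t)\right)=\left[\overline{Y}_{ik}^{\mathbf{N}}(t)\right]_{ik}$, the same four-step scheme used in Theorems~\ref{th:convbipartite} and~\ref{th:convmulti}: (i) the normalized martingale $\left(\overline{M}_{ik}^{\mathbf{N}}(t)\right)$ vanishes weakly; (ii) the family $\left(\mathbf{\overline{Y}}^{\mathbf{N}}(t)\right)$ is tight with almost surely continuous accumulation points; (iii) every such accumulation point solves the integral version of~\eqref{eq:multiva33}; and (iv) the vector field in~\eqref{eq:multiva33} is globally Lipschitz on the relevant domain, so the limit is unique and the whole sequence converges. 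The only structural novelty over the single-virus multipartite case is that the per-island state now ranges over the enlarged index set $\mathcal{Y}^{\mathbf{N}}_i$ of strain-distributions, and that the effective healthy fraction $\frac{N_i-\langle \mathbf{Y}_i^{\mathbf{N}}(s-),\mathbf{1}\rangle}{N_i}$ couples all $K$ strains at island $i$.

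For step (i) I would exploit that $\left\{\mathcal{N}_{\gamma_{ji}q\left(\frac{N_i-\langle y,\mathbf{1}\rangle}{N_i}\right)}\right\}_{j,q,y}$ is an independent family, so Theorem~\ref{th:ortho} renders the integral martingales $M^{\mathbf{N}}_{j,y,q}(t)$ orthogonal exactly as in~\eqref{eq:ortogo1}-\eqref{eq:ortogo2}; the cross terms drop and ${\sf E}\left(M_{aik}^{\mathbf{N}}(t)\right)^2=\sum_{j\sim i}\sum_{y,q}{\sf E}\left(M^{\mathbf{N}}_{j,y,q}(t)\right)^2$. Itô isometry and the quadratic variation $\langle \mathcal{N}_\gamma(t)-\gamma t\rangle=\gamma t$ evaluate each summand, and the key observation is that the indicators $\mathbf{1}_{\{Y_{jk}^{\mathbf{N}}(s-)=q,\mathbf{Y}_i^{\mathbf{N}}(s-)=y\}}$ remain pairwise disjoint over $(y,q)$ even though $\mathcal{Y}^{\mathbf{N}}_i$ is larger than $\left\{1,\ldots,N_i\right\}$, because at each instant the island occupies exactly one distribution $y$. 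Bounding $q\le N_j$ and $\frac{N_i-\langle y,\mathbf{1}\rangle}{N_i}\le 1$ then yields ${\sf E}\left(\overline{M}_{aik}^{\mathbf{N}}(t)\right)^2\le \frac{M}{N_i}\max_{j}\left\{\gamma_{ji}^k\frac{N_j}{N_i}\right\}t\to 0$; the healing term is handled identically, and Doob's inequality together with Theorem~\ref{th:probimplyweak} upgrades this $\mathcal{L}_2$ decay to $\left(\overline{M}_{ik}^{\mathbf{N}}(t)\right)\Rightarrow 0$.

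Steps (ii) and (iii) would then transcribe the arguments of Theorem~\ref{th:convmulti} up to notation: the oscillation modulus $\omega\left(\overline{Y}_{ik}^{\mathbf{N}},\delta,T\right)$ is dominated by the martingale oscillation plus a drift term linear in $\delta$ (summed over $j\sim i$), so Theorem~\ref{th:arzela} gives tightness and almost-sure continuity of any limit, and the term-by-term convergence of Theorem~\ref{eq:finalconvergence} carries over once the scalar factor $\left(1-\overline{Y}_i^{\mathbf{N}}(s-)\right)$ is replaced by the coupled factor $\left(1-\sum_{m=1}^K \overline{Y}_{im}^{\mathbf{N}}(s-)\right)$, whose almost-sure uniform convergence on compacts follows from Skorokhod representation applied to the continuous limit. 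For step (iv) the field $F_{ik}$ in~\eqref{eq:multiva33} is polynomial, hence smooth and locally Lipschitz; the genuinely new point is to verify that the relevant domain is the product of simplices $\Delta=\left\{[y_{mn}]:y_{mn}\ge 0,\ \sum_{m=1}^K y_{im}\le 1\ \forall i\right\}$, which is compact and invariant under~\eqref{eq:multiva33} (note $F_{ik}\ge 0$ on the face $y_{ik}=0$ and $\sum_m F_{im}\le 0$ on the face $\sum_m y_{im}=1$), so $F$ is globally Lipschitz there and the solution is unique and global, forcing all accumulation points to coincide and the whole sequence to converge. I expect the main obstacle to be precisely this coupling through the shared healthy fraction: it is what enlarges the per-island index set to $\mathcal{Y}^{\mathbf{N}}_i$ and what must be controlled to secure invariance of $\Delta$; once the disjointness of the indicators over $\mathcal{Y}^{\mathbf{N}}_i$ and the invariance of $\Delta$ are established, the remaining estimates are routine adaptations of the single-virus multipartite proof.
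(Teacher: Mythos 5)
Your proposal is correct and follows essentially the same four-step route as the paper's own proof: orthogonality of the compensated Poisson integral martingales via Theorem~\ref{th:ortho} plus It\^{o} isometry to kill the martingale in $\mathcal{L}_2$, Doob's inequality and Theorem~\ref{th:probimplyweak} for weak convergence to zero, the Arzel\`{a}-Ascoli-type tightness bound, term-by-term identification of accumulation points, and uniqueness from the Lipschitz field. The only difference is that you explicitly verify invariance of the product of simplices by checking the boundary behavior of $F_{ik}$, a detail the paper asserts without proof; this is a welcome elaboration rather than a different argument.
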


\begin{proof}

\textbf{$\left.\mbox{i}\right)$ Martingale vanishes.} Since the underlying Poisson point processes $\mathcal{N}_{\gamma_{ji} q\left(\frac{N_i-\langle y,\mathbf{1}\rangle}{N_i}\right)}$ are independent then, it follows as a Corollary to Theorem~\ref{th:ortho} that the compensated martingales $M^{\mathbf{N}}_{j,y,q}(t)$, are pairwise orthogonal. It turns out that
\begin{eqnarray*}
{\sf E}\left( \sum_{j\sim i}\sum_{y\in\mathcal{Y}^{\mathbf{N}}_i}\sum_{q = 1}^{N_j} M^{\mathbf{N}}_{j,y,q}(t)\right)^2
&\!\!\!\!\!\!\!\!\!=\!\!\!\!&\sum_{j\sim i}\sum_{y\in\mathcal{Y}^{\mathbf{N}}_i}\sum_{q = 1}^{N_j}{\sf E}\left(M^{\mathbf{N}}_{j,y,q}(t)\right)^2\nonumber\\
&\!\!\!\!\!\!\!\!\!=\!\!\!\!&\sum_{j\sim i}\sum_{y\in\mathcal{Y}^{\mathbf{N}}_i}\sum_{q = 1}^{N_j}{\sf E}\left(\int_{0}^{t} {\bf 1}_{\left\{Y_{jk}^{\mathbf{N}}(s-)=q, \mathbf{Y}_{i}^{\mathbf{N}}(s-)=y\right\}}\left(\gamma^k_{ji} q\left(\frac{N_i-\langle y,1\rangle}{N_i}\right)ds\right)\right)\nonumber\\
&\!\!\!\!\!\!\!\!\!\leq\!\!\!\!&\sum_{j\sim i}{\sf E}\left(\int_{0}^{t}\sum_{y\in\mathcal{Y}^{\mathbf{N}}_i}\sum_{q = 1}^{N_j}{\bf 1}_{\left\{Y_{jk}^{\mathbf{N}}(s-)=q, \mathbf{Y}_{i}^{\mathbf{N}}(s-)=y\right\}}\gamma^k_{ji} N_jds\right)\nonumber\\
&\!\!\!\!\!\!\!\!\!\leq\!\!\!\!&M\max_{j=1,\cdots,M}\left\{\gamma^k_{ji} N_j\right\}t,\nonumber
\end{eqnarray*}
and thus,
\begin{equation}
{\sf E}\left( \frac{1}{N_i}\sum_{j\sim i}\sum_{y\in\mathcal{Y}^{\mathbf{N}}_i}\sum_{q = 1}^{N_j} M^{\mathbf{N}}_{j,y,q}(t)\right)^2\leq \frac{M}{N_i}\left(\frac{\max_{j=1,\cdots,M}\left\{\gamma^k_{ji}N_j\right\}}{N_i}\right)t\rightarrow 0.\nonumber
\end{equation}
Similarly ${\sf E}\left(\overline{M}_{b}^{\mathbf{N}}(t)\right)^2\rightarrow 0$ and therefore, the normalized martingale $\overline{M}_{ik}^{\mathbf{N}}(t)$ converges to zero in $\mathcal{L}_2$ for all time $t$, $t\geq 0$. Now, from Doob's inequality and from Theorem~\ref{th:probimplyweak}, we conclude that $\left(\overline{M}_{ik}^{\mathbf{N}}(t)\right)\Rightarrow 0$.

\textbf{$\left.\mbox{ii}\right)$ The family $\left(\mathbf{\overline{Y}}^{\mathbf{N}}(t)\right)$ is tight.}
As in equations~(\ref{eq:variation1})-(\ref{eq:variation5}), we have
\begin{eqnarray}
\label{eq:multivariation-b}
\omega\left(\overline{Y}_{ik}^{\mathbf{N}}, \delta, T\right)&\leq& \sup_{0\leq t\leq t+\delta\leq T}\left\{\sup_{u,v\in\left[t,t+\delta\right]}\left|\overline{M}_{ik}^{\mathbf{N}}(u)-\overline{M}_{ik}^{\mathbf{N}}(v)\right|\right\}
+M\frac{\max_{j=1,\cdots,M}\left\{\gamma^k_{ji}N_j\right\}}{N_i}\delta\nonumber\
\end{eqnarray}
From Theorem~\ref{th:arzela} and similar arguments as in Section~\ref{sec:meanfield}, we conclude that $\left(\mathbf{\overline{Y}}^{\mathbf{N}}(t)\right)$ is a tight family, that is, it admits a convergent subsequence $\left(\mathbf{\overline{Y}}^{\mathbf{N_k}}(t)\right)\Rightarrow \left(\mathbf{\overline{Y}}(t)\right)$. Also, from Theorem~\ref{th:arzela}, $\left(\mathbf{\overline{Y}}(t)\right)$ is almost surely continuous.

\textbf{$\left.\mbox{iii}\right)$ If $\left(\mathbf{\overline{Y}}^{\mathbf{N}_k}(t)\right)\Rightarrow \left(\mathbf{\overline{Y}}(t)\right)$ then, $\left(\mathbf{\overline{Y}}(t)\right)$ is solution of the ODE~(\ref{eq:multiva33}).} It follows similarly to as done in the proof of Theorem~(\ref{eq:finalconvergence}), remarking that we assume a finite (fixed) number of islands $M$.

\textbf{$\left.\mbox{iv}\right)$ $\left(\mathbf{\overline{Y}}^{\mathbf{N}}(t)\right)\Rightarrow \left(\mathbf{\overline{Y}}(t)\right)$, where $\left(\mathbf{\overline{Y}}(t)\right)$ is solution of the ODE~(\ref{eq:multiva33}).} Note that for similar reasons as exposed in Subsection~\ref{subsec:singlevirus}, solution of~\eqref{eq:multiva33} exists and is unique. Thus, any convergent subsequence $\left(\mathbf{\overline{Y}}^{\mathbf{N_l}}(t)\right)$ converges to the same weak limit given by the unique solution of~(\ref{eq:multiva33}) and thus, the whole sequence converges.
\end{proof}

\section{Conclusion}\label{sec:conclusion}

In this paper, we established the fluid limit dynamics of a multivirus epidemics over a multipartite network from a peer-to-peer stochastic network model of diffusion. Namely, we proved that the normalized macrostate $\left(\mathbf{\overline{Y}}_{ij}^{\mathbf{N}}(t)\right)$ collecting the fraction of $j$-infected nodes $\left(\overline{Y}^{\mathbf{N}}_{ij}(t)\right)$ per island $i \in \left\{1,\ldots,M\right\}$ with $j\in \left\{1,\ldots,K\right\}$ over $G^{\mathbf{N}}$ converges weakly, under the Skorokhod topology on the space of \emph{c\`{a}dl\`{a}g} sample paths, to the solution $(y(t))$ of a $\left(M\times K\right)$-dimensional ordinary differential equation given by~(\ref{eq:multiva33}). To this effect, we first proved that the underlying martingale perturbation $\left(\overline{M}^{\mathbf{N}}(t)\right)$ vanishes as $\mathbf{N}$ grows large, which implies that the macrostate family $\left(\mathbf{\overline{Y}}_{ij}^{\mathbf{N}}(t)\right)$ is tight in $\mathbf{N}$. Then, we showed that any weak accumulation point of the family $\left(\mathbf{\overline{Y}}^{\mathbf{N}}(t)\right)$ is solution to the vector ordinary differential equation~(\ref{eq:multiva33}) with Lipschitz vector field. From the uniqueness of the solutions of the resulting meanfield differential equation~(\ref{eq:multiva33}), we concluded that the whole sequence $\left(\mathbf{\overline{Y}}_{ij}^{\mathbf{N}}(t)\right)$ converges weakly to the solution of~(\ref{eq:multiva33}). We now present a numerical experiment of two strains of virus $x$ and $y$ spreading across a bipartite network via our SIS stochastic peer-to-peer law of infection. Figure~\ref{fig:completetwo} illustrates the Matlab results for the evolution of two strains in the bipartite network (refer to the noisy curves) and we superimpose on it the corresponding meanfield evolution (refer to the smooth curves). Figures~\ref{fig:complete100}, \ref{fig:complete1000}, and~\ref{fig:complete4000} illustrate the evolution of the fractions of $x$-infected (blue/solid curves) and $y$-infected (red/dashed curves) nodes at islands $1$ and $2$. The boldfaced curves represent the associated meanfield solutions.
\begin{figure}[htb]
        \centering
        \begin{subfigure}[htb]{0.3\textwidth}
                \centering
                \includegraphics[width=\textwidth]{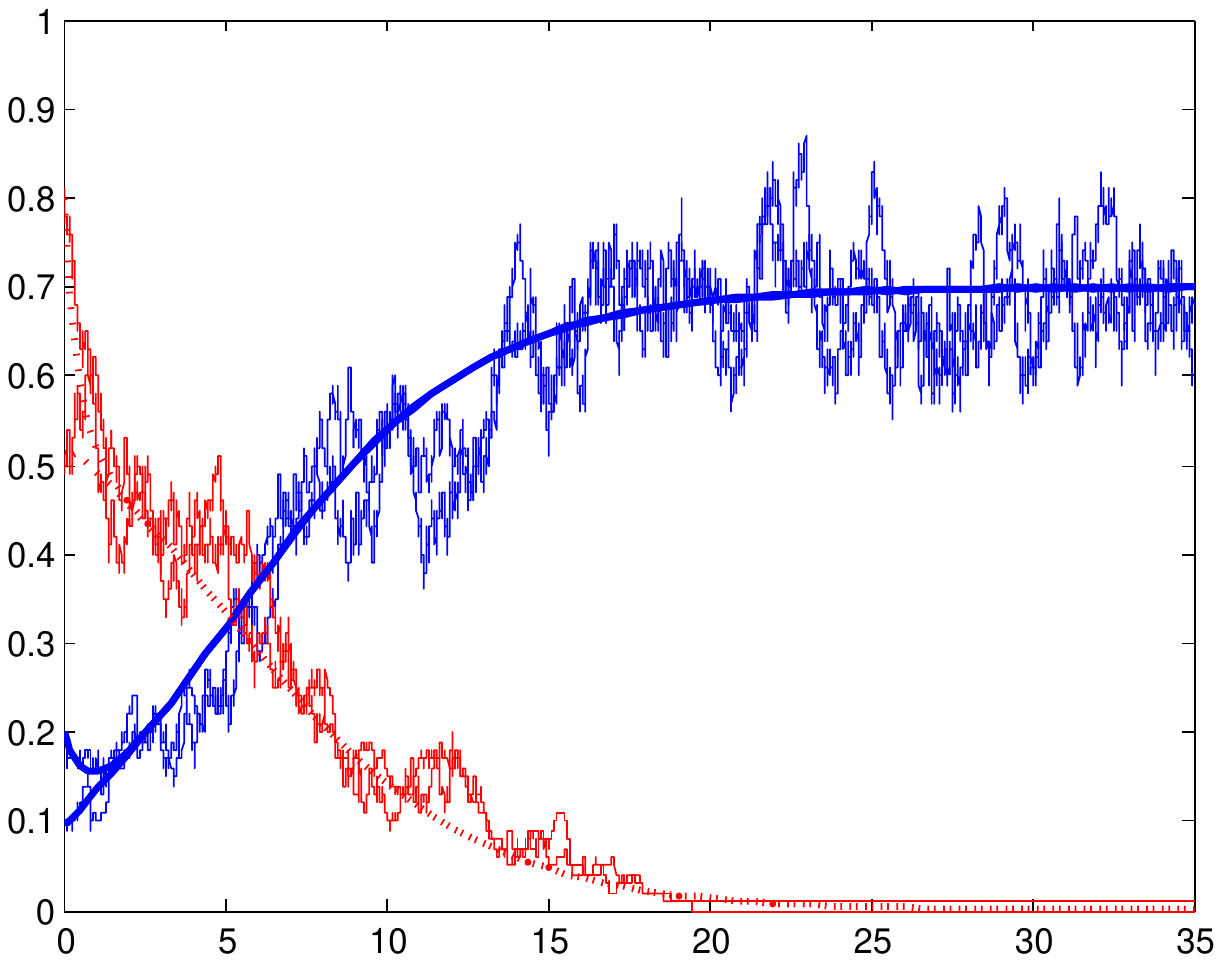}
                \caption{100 nodes per island.}
                \label{fig:complete100}
        \end{subfigure}%
        \begin{subfigure}[htb]{0.3\textwidth}
                \centering
                \includegraphics[width=\textwidth]{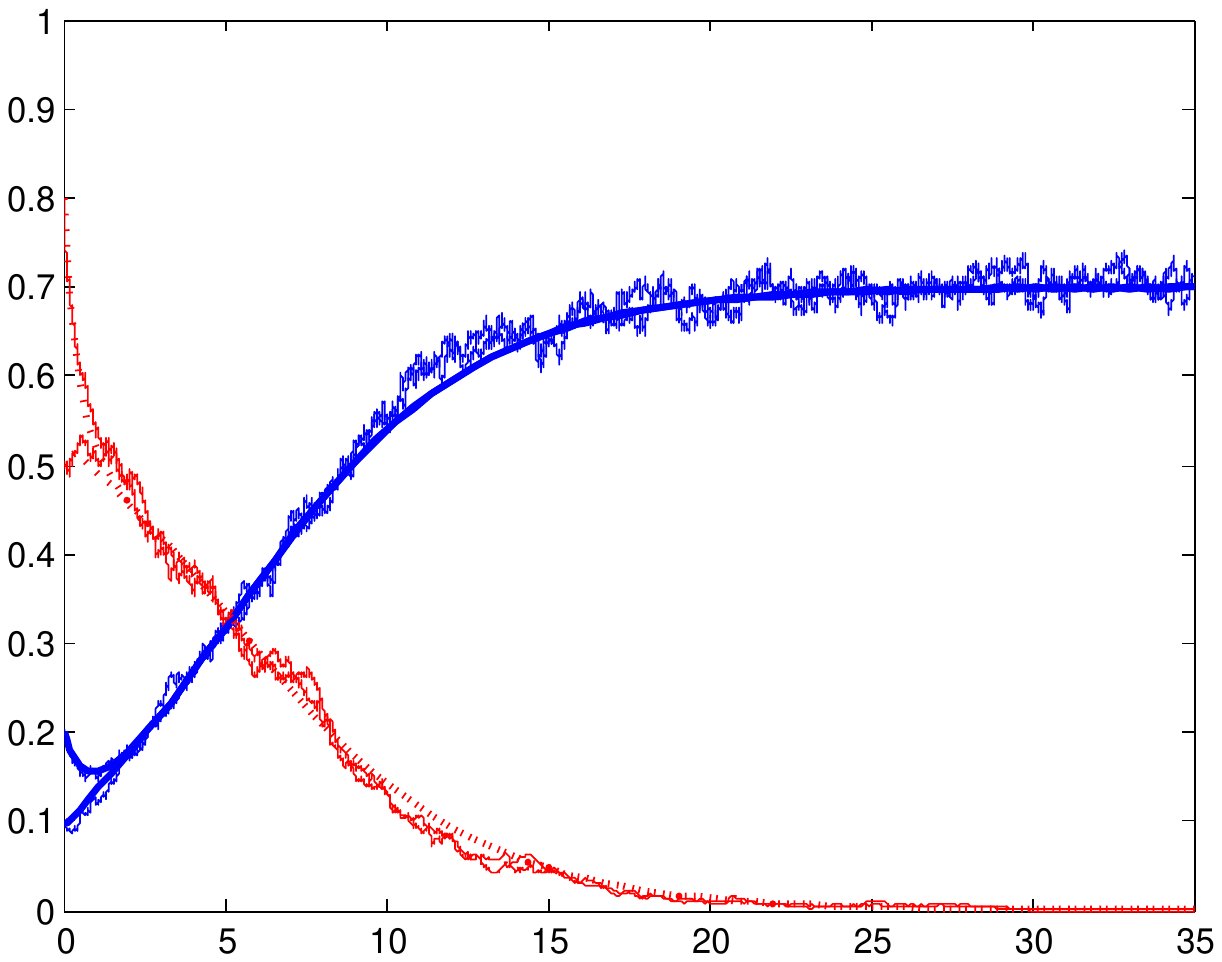}
                \caption{1000 nodes per island.}
                \label{fig:complete1000}
        \end{subfigure}
        \begin{subfigure}[htb]{0.3\textwidth}
                \centering
                \includegraphics[width=\textwidth]{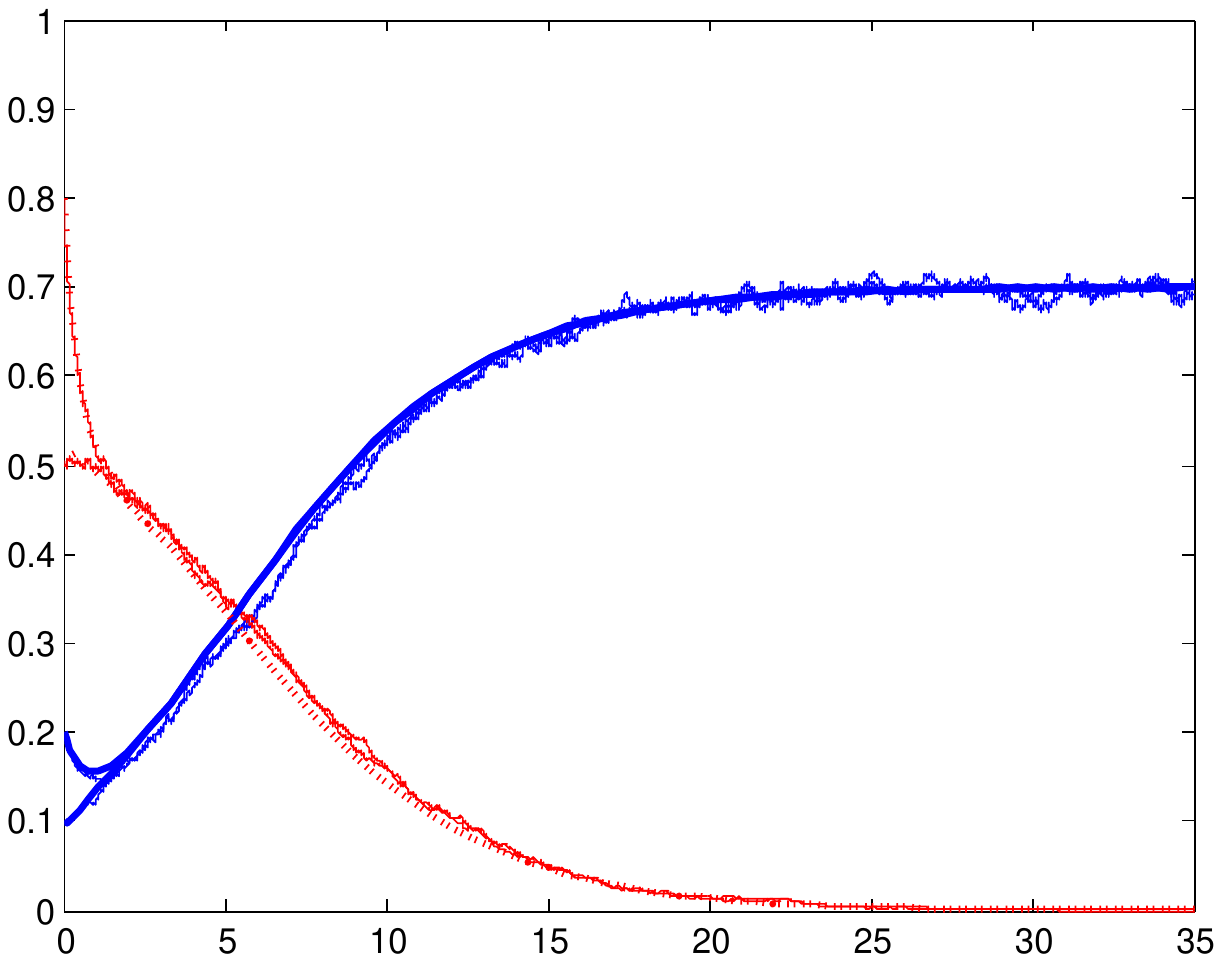}
                \caption{4000 nodes per island.}
                \label{fig:complete4000}
        \end{subfigure}
        \caption{The plots represent the numerical evolution of the fractions of $x$-infected (in blue or solid) and $y$-infected (in red or dashed) nodes at each island $1$ and $2$. The boldfaced curves represent the solution of the limiting vector differential equation of a bi-viral epidemics in a bipartite network.}\label{fig:completetwo}
\end{figure}
We observe that, as the number of nodes grows large, the randomness decreases and the infected population dynamics fits the meanfield prediction. For the experiment, we have set $\gamma^x:=\gamma_{12}^x=\gamma_{21}^x> \gamma_{12}^y=\gamma_{21}^y=:\gamma^y$ with $\mu^x=\mu^y=1$ and observe, in particular, that the most virulent strain survives and the weaker strain dies out. The qualitative analysis of the meanfield dynamics~(\ref{eq:multiva33}) is developed in~\cite{Qualitative} where we proved that the natural selection phenomenon observed in Figures~\ref{fig:complete100}, \ref{fig:complete1000}, and~\ref{fig:complete4000} extends to symmetric regular multipartite networks, where by symmetric and regular we mean $\gamma^y=\gamma^y_{ij}$ and $d=d_i=d_j$ for all pair of communicating islands $i$ and $j$ and for all types of virus $y$.

\small

\bibliographystyle{IEEEtran}
\bibliography{IEEEabrv,biblio_new}


\end{document}